\newtheorem{obs}[theorem]{Observation}
\newtheorem{prop}[theorem]{Proposition}
\renewcommand{\paragraph}[1]{\noindent {\bf #1}}
\newcommand{\eps}{\epsilon}
\newcommand{\R}{\mathbb{R}}
\newcommand{\parent}{\text{\tt parent}}
\newcommand{\dsum}{\displaystyle\sum}
\newcommand{\mcLA}{{\mathcal L \mathcal A}}
\newcommand{\ignore}[1]{}
\renewcommand{\paragraph}[1]{\vspace{1.5mm}\noindent \textbf{#1}}
\begin{document}

\title{A Local Computation Approximation Scheme to Maximum Matching}
\author{ Yishay Mansour\thanks{
Supported in part by  the Google Inter-university center for
Electronic Markets and Auctions, by a grant from the Israel Science
Foundation, by a grant from United States-Israel Binational Science
Foundation (BSF), and the Israeli Centers of Research Excellence
(I-CORE) program (Center No. 4/11).} \and Shai Vardi\thanks{
Supported in part by the Google Europe Fellowship in Game Theory.}}
\institute{School of Computer Science, Tel Aviv University, Tel Aviv
69978, Israel. \email{ mansour,shaivar1@post.tau.ac.il}}

\maketitle
\keywords{Local Computation Algortithms, Sublinear Algorithms, Approximation Algorithms, Maximum Matching}
\begin{abstract}
We present a polylogarithmic local computation matching algorithm
which guarantees a $(1-\eps)$-approximation to the maximum matching in graphs of bounded degree.
\end{abstract}

\section{Introduction}

Finding \emph{matchings} -  sets of vertex disjoint edges in a graph - has been an important topic of research for computer scientists for over 50 years. Of particular importance is finding  \emph{maximum} matchings - matchings of maximal cardinality.
Algorithms that find a maximum matching have many applications in computer science; in fact, their usefulness extends far beyond the boundaries of computer science - to disciplines such as economics, biology and chemistry.

The first works on matching were based on unweighted bipartite graphs (representing problems such as matching men and women). Hall's marriage theorem \cite{H35} gives a necessary and sufficient condition for the existence of a perfect matching\footnote{A perfect matching includes all the nodes of a bipartite graph.}. The efficient algorithms for the weighted bipartite matching problem date back to the Hungarian method \cite{K55,M57}.
In this work we focus on maximum matchings in general unweighed graphs.
Berge \cite{B57} proved that a matching is a maximum matching if and only if the graph has no augmenting paths with respect to the matching. Edmonds used augmenting paths to find a maximum matching in his seminal work \cite{E65}, in which he showed that a maximum matching can be found in polynomial time.  Much work on matching been done since (e.g., \cite{H06,HP73,MV80,MS04}).
Our work uses ideas from Hopcroft and Karp's algorithm  for finding maximal matching in bipartite graphs \cite{HP73}, which runs  in time  $O(n^{2.5})$.



Local computation algorithms (LCAs) \cite{RTVX11} consider the
scenario in which we must respond to queries (regarding a feasible
solution) quickly and efficiently, yet we never need the entire
solution at once. The replies to the queries need to be {\em
consistent}; namely, the responses to all possibly queries combine
to a single feasible solution. For example, an LCA for matching in a
graph $G$, receives an {\em edge-query} for an edge $e \in G$ and
replies ``yes'' if and only if $e$ is part of the matching. The
replies to all the possible edge queries define a matching in the
graph.

In this work we present a \emph{local computation approximation scheme}
to maximum matching. Specifically, we present an LCA such that for any $\eps>0$, the edge-query replies  comprise a matching that is a $(1-\eps)$-approximation to the maximum matching.
Our LCA requires $O(\log^3{n})$ space, and with probability at least $1- 1/n^2$, for any edge-query, it runs in time $O(\log^4{n})$.
To the best of our knowledge, this is the first local computation approximation algorithm for a problem which provably does not have an LCA.

\subsubsection{Related work.}
In the distributed setting, Itai and Israeli \cite{II86} showed a
randomized algorithm which computes a maximal matching (which is a
$1/2$-approximation to the maximum matching) and runs in
$O(\log{n})$ time with high probability. This result has been
improved several times since (e.g., \cite{CH03,HKL06}); of
particular relevance is the approximation scheme of Lotker et al.
\cite{LPP08}, which, for every $\eps>0$,  computes a
$(1-\eps)$-approximation to the maximum matching in $O(\log{n})$
time.
Kuhn et al., \cite{KMW06} proved that any distributed algorithm,
randomized or deterministic, requires (in expectation)
$\Omega(\sqrt{\log{n}/\log\log{n}})$ time to compute a
$\Theta(1)$-approximation to the maximum matching, even if the
message size is unbounded.

Rubinfeld et al., \cite{RTVX11} showed how to transform distributed
algorithms to LCAs, and gave LCAs for several problems, including
maximal independent set and hypergraph $2$-coloring.  Unfortunately, their method bounds the running time
of the transformed algorithm exponentially in the running time of the distributed algorithm. Therefore, distributed algorithms for approximate maximum matching cannot be (trivially) transformed to LCAs using their technique.

Query trees model the dependency of queries on the replies to other queries, and
were introduced in the local setting by Nguyen and Onak \cite{NO08}. If a random
permutation of the vertices is generated, and a sequential algorithm
is simulated on this order, the reply to a query on vertex $v$ depends only on the replies to queries on the neighbors of $v$ which come before it in the permutation.
Alon et al., \cite{ARVX11} showed that if the running time of an algorithm is $O(f(n))$, where $f$ is polylogarithmic in $n$, a $1/n^2$ - almost $f(n)$-independent ordering
on the vertices can be generated in time $O(f(n)\log^2{n})$, thus
guaranteeing the polylogarithmic space bound of any such algorithm.
Mansour et al., \cite{MRVX12} showed that the size of the query tree
can be bounded, with high probability, by $O(\log{n})$, for graphs of bounded degree.
They also showed that it is possible to transform many on-line algorithms to LCAs. One of their examples is an LCA for maximal matching, which immediately gives a $1/2$-approximation to the maximum matching.
In a recent work, \cite{MV13}, LCAs were presented for mechanism design problems.
One of their impossibility results shows that
any LCA for maximum matching requires $\Omega(n)$ time.

\section{Notation and Preliminaries}
\label{sect:prel}

\subsection{Graph Theory}

For an undirected graph $G=(V,E)$, a {\em matching} is a subset of edges $M
\subseteq E$ such that no two edges $e_1,e_2\in M$ share a vertex.
We denote by $M^*$ a matching of maximum cardinality.
%
An {\em augmenting path} with respect to a matching $M$ is a simple path
whose endpoints are \emph{free} (i.e., not part of any edge in the
matching $M$), and whose edges alternate between $E\setminus M$ and
$M$.
A set of augmenting paths $P$ is  {\em independent} if no two paths
$p_1,p_2\in P$ share a vertex.

For sets $A$ and $B$, we denote $A \oplus B \overset{def}{=} (A \cup
B) \setminus (A \cap B)$. An important observation regarding
augmenting paths and matchings is the following.

\begin{obs}
If $M$ is a matching and $P$ is an independent set of augmenting
paths, then $M\oplus P$ is a matching of size $|M|+|P|$.
\end{obs}

A vertex $u \in V$ is a \emph{neighbor} of vertex $v \in V$ if $(u,v)
\in E$. Let $N(v)$ denote the set of neighbors of $v$, i.e.,
$N(v)=\{u: (v,u)\in E\}$.
We assume that we have direct access both to $N(v)$ and to
individual edges.


An independent set (IS) is a subset of vertices $W \subseteq V$ with
the property that for any $u,v\in W$ we have $(u,v)\not\in E$,
namely, no two vertices $u,v \in W$ are neighbors in $G$. The IS is
\emph{maximal} (denoted by MIS) if no other vertices can be added to it without
violating the independence property.

\subsection{Local Computation Algorithms}

We use the following model of local computation algorithms
(LCAs)\cite{RTVX11}.\footnote{Our model differs slightly from the
model of \cite{RTVX11} in that their model requires that the LCA
\emph{always} obeys the time and space bounds, and returns an error
with some probability. It is easy to see that any algorithm which
conforms to our model can be modified to conform to the model of
\cite{RTVX11} by forcing it to return an error if the time or space
bound is violated.}
A {\em $(t(n),$ $ s(n), $ $\delta(n))$ - local computation
algorithm} $\mcLA$ for a computational problem is a (randomized)
algorithm which receives an input of size $n$, and a query $x$.
Algorithm $\mcLA$ uses at most $s(n)$ memory, and with probability
at least $1-\delta(n)$, it replies to any query $x$ in time $t(n)$.
The algorithm must be \emph{consistent}, that is, the replies to all of the
possible queries combine to a single feasible solution to the
problem.

\subsection{Query Trees}
Let $G=(V, E)$ be a graph of bounded degree $d$.
A real number $r(v)\in [0,1]$ is assigned independently and uniformly at random
to every vertex $v$ in the graph.
We refer to this random number as the \emph{rank} of $v$.
Each vertex in the graph $G$ holds an input $x(v) \in R$, where the range $R$ is some finite set.
A randomized Boolean function $F$ is defined inductively on
the vertices in the graph such that $F(v)$ is a function of the input $x(v)$ at $v$
as well as the values of $F$ at the neighbors $w$ of $v$ for which $r(w)<r(v)$.

We would like to upper bound  the number of queries that are needed to be made
vertices in the graph in order to compute $F(v_0)$ for any vertex
$v_0 \in G$. We turn to the simpler task of bounding the size of a
certain $d$-regular tree, which is an upper bound on the number of
queries. Consider an infinite $d$-regular tree $\mathcal{T}$ rooted
at $v_{0}$. Each node $w$ in $\mathcal{T}$ is assigned independently
and uniformly at random a distinct real number $r(w)\in [0,1]$. For
every node $w \in \mathcal{T}$ other than $v_{0}$, let $\parent(w)$
denote the parent node of $w$. We grow a (possibly infinite)
subtree $T$ of $\mathcal{T}$ rooted at $v$ as follows: a node $w$ is
in the subtree $T$ if and only if $\parent(w)$ is in $T$ and
$r(w)<r(\parent(w))$ .
 We keep growing $T$ in this manner such that a node $w'\in T$ is a leaf node in $T$
if the ranks of its $d$ children are all larger than $r(w')$.
We call the random tree $T$ constructed in this way a \emph{query tree} and
we denote by $|T|$ the random variable that corresponds to the size of $T$.
Note that $|T|$ is an upper bound on the number of queries. 

If the reply to a query $q$ depends (only) on the replies to a set of queries, $Q$, we call  $Q$ the set of \emph{relevant} queries with respect to $q$.

\subsection{Random Orders}

Let $[n]$ denote the set $\{1,\ldots, n\}$.

A distribution $D:\{0,1\}^{n} \to \R^{\geq 0}$ is \emph{$k$-wise
independent} if, when $D$ is restricted to any index subset $S\subset [n]$
of size at most $k$, the induced distribution over $S$ is the
uniform distribution.
%


A random ordering $D_{\mathbf{r}}$ induces a probability
distribution over permutations of $[n]$. It is said to
\emph{$\eps$-almost $k$-wise independent} if for any subset
$S\subset [n]$ of size at most $k$, the variation distance between
the distribution induced by $D_{\mathbf{r}}$ on $S$ and a uniform
permutation over $S$ is at most $\epsilon$.
We use the following Theorem from \cite{ARVX11}.

\begin{theorem}[\cite{ARVX11}]\label{thm:orderings}
Let $n\geq 2$ be an integer and let $2\leq k \leq n$.
Then there is a construction of $\frac{1}{n^{2}}$-almost $k$-wise
independent random ordering over $[n]$ whose seed length is $O(k\log^{2}n)$.
\end{theorem}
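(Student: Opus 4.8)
The plan is to realize the ordering by giving each element of $[n]$ a short random ``rank string'' drawn from a bounded-independence distribution over bits, and then sorting. Fix $\ell=\Theta(\log n)$ with the constant large enough that $\binom{n}{2}2^{-\ell}\le 1/n^2$ (any $\ell\ge 4\log_2 n$ works), and put $N:=n\ell$. Standard constructions of limited-independence distributions over bits (polynomial evaluation over a finite field, or BCH codes) give an $N$-bit distribution $\mathcal{D}$ that is $(k\ell)$-wise independent and can be sampled from a seed of length $O(k\ell\cdot\log N)=O(k\log^2 n)$; this is the only place the seed budget is spent. Given a sample from $\mathcal{D}$, parse it into $n$ blocks $s_1,\dots,s_n\in\{0,1\}^\ell$, one per element, and let $\pi$ be the ordering of $[n]$ obtained by sorting the pairs $(s_i,i)$ lexicographically, so that ties among equal strings are broken by index and $\pi$ is a genuine total order.

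The heart of the argument is the claim that for every $S\subseteq[n]$ with $|S|=m\le k$, the restriction $\pi|_S$ is within variation distance $1/n^2$ of a uniformly random ordering of $S$. Since $m\ell\le k\ell$ and $\mathcal{D}$ is $(k\ell)$-wise independent, the tuple $(s_i)_{i\in S}$ is exactly uniform over $(\{0,1\}^\ell)^m$. Let $\mathrm{Coll}$ be the event that two of these $m$ strings coincide; a union bound gives
\[
\Pr[\mathrm{Coll}]\;\le\;\binom{m}{2}2^{-\ell}\;\le\;\binom{n}{2}2^{-\ell}\;\le\;\frac{1}{n^2}.
\]
Conditioned on $\overline{\mathrm{Coll}}$ the strings $(s_i)_{i\in S}$ are distinct and still uniform, so by exchangeability of i.i.d.\ uniform values (permuting the string values is a measure-preserving bijection, and once the strings are distinct the index tie-break is irrelevant) the order they induce on $S$ is a uniformly random permutation of $S$. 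Hence for every set $A$ of orderings of $S$,
\[
\bigl|\Pr[\pi|_S\in A]-|A|/m!\bigr|\;\le\;\Pr[\mathrm{Coll}]\;\le\;1/n^2,
\]
which is exactly the definition of $\frac{1}{n^2}$-almost $k$-wise independence. Together with the seed-length bound above, this proves the theorem.

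The step I expect to require the most care is the passage ``conditioned on no collision, the induced order is exactly uniform'': it rests on the fact that conditioning $m$ i.i.d.\ uniform strings on being pairwise distinct leaves them exchangeable, and on the tie-breaking rule being a fixed function of the indices alone; both are elementary but should be stated cleanly. The remaining issue is parameter bookkeeping --- one must track that the required independence parameter is $k\ell$, not $k$, and it is precisely this that costs the extra logarithmic factor and turns the naive $O(k\log n)$ into $O(k\log^2 n)$; the slack is comfortable since $\ell=\Theta(\log n)$ and $k\le n$ already force the collision probability below $n^{-2}$. (If one instead uses a $k$-wise independent family of functions $[n]\to[\mathrm{poly}(n)]$ built from degree-$(k-1)$ polynomials over $\mathbb{F}_q$ with $q=\mathrm{poly}(n)$, the same analysis yields seed length $O(k\log n)$; since the weaker bound in the theorem already suffices for our application we do not pursue this.)
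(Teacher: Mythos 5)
Your proof is correct and follows essentially the same route as the paper's sketch (which only cites \cite{ARVX11}): assign each element a $\Theta(\log n)$-bit rank from a bounded-independence distribution, sort, and bound the deviation from a uniform permutation on any $k$-subset by the collision probability $\le 1/n^2$. The only cosmetic difference is that you draw one $(k\ell)$-wise independent string of $n\ell$ bits instead of $4\log n$ independent copies of $n$ $k$-wise independent bits, which yields the same exact-uniformity of the restricted ranks and the same $O(k\log^2 n)$ seed length.
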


We provide a short, intuitive explanation of the construction.
We can construct $n$ $k$-wise independent random variables $Z =
(z_1, \ldots, z_n)$, using a seed of length $k \log{n}$ (see
\cite{ABI86}). We generate $4 \log{n}$ independent copies of
$k$-wise independent random variables, $Z_1,\ldots Z_{4\log{n}}$.
For $i\in[n]$, taking the $i$-th bit of each $Z_j, 1 \leq j \leq
4\log{n}$ makes for a random variable $r(i) \in \{0,1\}^{4
\log{n}}$, which can be expressed as an integer in $\{0,1,\ldots,
n^4-1\}$. The order is induced by $r$ ($u$ comes before $v$ in the
order if $r(u)<r(v)$). The probability that
there exists $u,v\in[n]$ such that $r(u)=r(v)$ is at most $1/n^2$,
hence the ordering is $1/n^2$-almost $k$-wise independent.

\section{Approximate Maximum Matching}
\label{sect:matching}

We present a local computation  approximation scheme for maximum
matching: We show an LCA that, for any $\eps>0$, computes a maximal
matching which is a $(1-\eps)$-approximation to the maximum
matching.

Our main result is the following theorem:

\begin{theorem}
\label{thm:main}
Let $G=(V,E)$ be a graph of bounded degree $d$.
Then there exists an $ $ $(O(\log^4{n}),$  $O(\log^3{n}), 1/n)$ -
LCA that, for every $\eps>0$,  computes a maximal matching which
is  a $(1 - \eps)$-approximation to the maximum matching.
\end{theorem}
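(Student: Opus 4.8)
The plan is to build the matching in $\Theta(1/\eps)$ phases, mirroring the Hopcroft--Karp strategy adapted to the local setting. We start from the empty matching $M_0 = \emptyset$. In phase $i$ we have a matching $M_{i-1}$; we locally find a large \emph{independent} set of short augmenting paths with respect to $M_{i-1}$, and set $M_i = M_{i-1} \oplus P_i$. By Berge's theorem and the standard Hopcroft--Karp counting argument, if $M$ has no augmenting path of length $\le 2\ell+1$, then $|M| \ge \frac{\ell}{\ell+1}|M^*|$; so after $\ell = \Theta(1/\eps)$ phases, each of which eliminates all augmenting paths up to the current length bound, the resulting matching is a $(1-\eps)$-approximation. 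The point of keeping the paths \emph{independent} and short is twofold: independence guarantees (via the Observation) that $M_{i-1}\oplus P_i$ is again a matching of size $|M_{i-1}|+|P_i|$, and bounded length keeps everything within the local-computation paradigm.

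The heart of the matter is to implement, locally, the step ``given a matching $M_{i-1}$, compute a maximal independent set of augmenting paths of length $\le 2i-1$.'' First I would observe that $M_{i-1}$ itself is produced by the previous phase's local procedure, so when we query an edge in phase $i$ we recursively resolve the $O(1)$ needed queries about $M_{i-1}$. To find the augmenting paths, I would work in an auxiliary \emph{conflict graph} whose vertices are the (constantly many, since $d$ and $\ell$ are constants) candidate augmenting paths through a given edge, with edges between paths that share a vertex; a maximal independent set in this conflict graph is exactly a maximal independent set of augmenting paths. Computing an MIS locally is done by the standard technique: draw i.i.d.\ ranks (here, via the $\eps$-almost $k$-wise independent ordering of Theorem~\ref{thm:orderings} with $k$ polylogarithmic), and include a path in the MIS iff it is a local rank-minimum among conflicting paths that have not already been excluded. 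Resolving the MIS-membership of one path triggers queries to conflicting paths of lower rank, which triggers queries about $M_{i-1}$, and so on --- exactly the inductive Boolean-function-on-ranks setup of the Query Trees subsection.

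For the complexity bounds I would argue as follows. The dependency structure of a single edge-query, unrolled through all $\ell = \Theta(1/\eps) = O(1)$ phases, is dominated by a constant number of nested query trees, each over the graph (with the constant-size conflict-graph gadget folded into the branching factor $d' = d^{O(\ell)}$, still a constant). By the Mansour et al.\ bound each such query tree has size $O(\log n)$ with probability $1 - 1/\mathrm{poly}(n)$; composing $O(1)$ levels and using the $\eps$-almost $k$-wise independent ordering in place of true randomness (which costs another $\log^2 n$ factor and $O(k\log^2 n) = O(\log^3 n)$ seed length, i.e.\ the space bound) yields running time $O(\log^4 n)$ with probability $1 - 1/n$ (after a union bound over the $O(1)$ phases and tuning the failure polynomial). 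Consistency is immediate: all randomness is fixed by the shared seed, so every edge-query is answered according to the same global matching $M_\ell$.

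The main obstacle I anticipate is controlling the recursion \emph{across phases} rather than within one phase. A single query in phase $i$ does not ask $O(1)$ queries into phase $i-1$; it asks about the $M_{i-1}$-status of every edge on every candidate path and every conflicting path, and each of those recursively expands. Naively the blow-up is a tower in $\ell$, so the care is in showing that because $\ell$ is a constant (depending only on $\eps$, not $n$) and each phase multiplies the query count by only a constant factor and the query-tree size by a constant factor, the final bound is still $O(\log^4 n)$ with the constant hidden depending on $\eps$ and $d$; and in verifying that the almost-$k$-wise independence survives this composition, i.e.\ that $k = O(\mathrm{polylog}\, n)$ suffices to fool the entire nested computation. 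A secondary subtlety is ensuring each $P_i$ is genuinely \emph{maximal} (not merely independent), since it is maximality that drives the Hopcroft--Karp progress bound; this is where the ``maximal independent set of paths, resolved by local rank-minimum'' implementation is essential rather than optional.
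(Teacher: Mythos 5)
Your overall architecture is the same as the paper's: phases indexed by augmenting-path length with the Hopcroft--Karp lemmas driving the approximation guarantee, a conflict graph whose nodes are length-$\ell$ augmenting paths, a locally simulated greedy MIS resolved by ranks drawn from the $1/n^2$-almost $k$-wise independent ordering of Theorem~\ref{thm:orderings}, and consistency from the shared seeds. You also correctly identify where the difficulty lies --- the recursion \emph{across} phases. But at exactly that point there is a genuine gap: you assert that ``each phase multiplies the query count by only a constant factor and the query-tree size by a constant factor,'' and elsewhere you propose to compose the Mansour et al.\ bound that a single query tree has size $O(\log n)$ with high probability. Neither of these closes the argument. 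The per-phase blow-up is not a deterministic constant: a single call at level $\ell$ spawns a \emph{random} number $X_{\ell-2}$ of path-verification calls at level $\ell-2$ (the number of relevant conflict-graph nodes explored), and the number of such level-$\ell$ calls is itself random. If all you use is the whp $O(\log n)$ bound per tree, composing $k=\lceil 1/\eps\rceil$ phases gives a bound of order $\log^{\Theta(1/\eps)} n$ on the number of calls, which is not the claimed $O(\log^4 n)$ (whose exponent must not depend on $\eps$); and ``constant factor per phase'' is only an expectation-level statement, which does not survive the union bounds needed for the $1-1/n$ success probability.

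What the paper actually supplies, and your proposal lacks, is a concentration argument strong enough to compose: Theorem~\ref{thm:tree} upgrades the query-tree bound to an exponential tail, $\Pr[|T|>N]\le e^{-cN}$, and Proposition~\ref{proposition:Z2} shows that a sum of a random number of such exponentially-tailed variables, conditioned on the count being at most $q$, is $O(q)$ except with probability $e^{-\Omega(q)}$. These are combined inductively over the phases (Proposition~\ref{proposition:Z1}) to show that the total number of calls at \emph{every} level stays exponentially concentrated, so the grand total $f$ is $O(\log n)$ with probability $1-1/n^2$ (Lemma~\ref{lemma:Z1}); only then do the $O(\log^3 n)$ seed length and the per-node rank computations give time $O(\log^4 n)$ and space $O(\log^3 n)$. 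Without an exponential (or at least compositional) tail bound of this kind, your complexity claim is unsubstantiated; the approximation-correctness half of your argument, by contrast, matches the paper (up to the cosmetic difference that the paper augments along paths of length exactly $\ell$ in phase $\ell$, invoking Lemma~\ref{lemma:hp1}, rather than paths of length at most $2i-1$).
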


Our algorithm is, in essence, an implementation of the abstract
algorithm of Lotker et al., \cite{LPP08}. Their algorithm, relies on
several interesting results due to Hopcroft and Karp \cite{HP73}.
First, we briefly recount some of these results, as they are
essential for the understanding of our algorithm.

\subsection{Distributed Maximal Matching}

While the main result of Hopcroft and Karp \cite{HP73} is an
improved matching algorithm for bipartite graphs, they show the
following useful lemmas for general graphs.
The first lemma shows that if the current matching has augmenting
paths of length at least $\ell$, then using a maximal set of
augmenting paths of length $\ell$ will result in a matching for
which the shortest augmenting path is strictly longer than $\ell$.
This gives a natural progression for the algorithm.

\begin{lemma}\cite{HP73} \label{lemma:hp1}
Let $G=(V,E)$ be an undirected graph, and let $M$ be some matching in $G$.
If the shortest augmenting path with respect to $M$ has length $\ell$ and
$\Phi$ is a maximal set of independent augmenting paths of length
$\ell$, the shortest augmenting path with respect to $M \oplus \Phi$ has
length strictly greater than $\ell$.
\end{lemma}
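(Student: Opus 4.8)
The plan is to use the symmetric difference structure of matchings. Let $M' = M \oplus \Phi$, and suppose for contradiction that $M'$ admits an augmenting path $p$ of length $\ell' \le \ell$. First I would record two facts about $M'$: it is a matching of size $|M| + |\Phi|$ (by the Observation on independent augmenting paths), and the symmetric difference of any matching with a shortest augmenting path of an intermediate matching behaves well under composition. Concretely, consider $M \oplus (M' \oplus \{\text{edges of }p\}) = \Phi \oplus \{\text{edges of }p\}$ viewed as a subgraph $H$ of $G$. Each vertex of $H$ has degree at most $2$, since it is a symmetric difference of two objects each contributing degree at most $1$ at every vertex (each path in $\Phi$ contributes degree $\le 1$, $p$ contributes degree $\le 1$), so $H$ is a disjoint union of simple paths and even cycles.

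Next I would count augmenting-path structure inside $H$ with respect to $M$. Since $\Phi$ consists of $|\Phi|$ augmenting paths with respect to $M$ and $p$ is an augmenting path with respect to $M' = M \oplus \Phi$, a standard argument shows $H$ contains at least $|\Phi| + 1$ vertex-disjoint augmenting paths with respect to $M$: the alternating components of $H$ relative to $M$ that have more $H\setminus M$ edges than $M$ edges are augmenting paths for $M$, and the total surplus of non-matching edges over matching edges in $H$ is exactly $|\Phi| + 1$ (each path of $\Phi$ and the path $p$ each contributes surplus $1$, and cycles contribute $0$). The key quantitative step is then to bound the total number of edges of $H$: the $|\Phi|$ paths of $\Phi$ all have length exactly $\ell$, and $p$ has length $\ell' \le \ell$, so $|E(H)| \le |E(\Phi)| + |E(p)| \le (|\Phi|+1)\ell$. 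Since the $|\Phi|+1$ augmenting paths for $M$ found inside $H$ are vertex-disjoint and each has length $\ge \ell$ (as $\ell$ is the shortest augmenting-path length for $M$), their total length is $\ge (|\Phi|+1)\ell$, forcing equality everywhere: every one of these augmenting paths has length exactly $\ell$, and $H$ is precisely the disjoint union of these $|\Phi|+1$ paths with no leftover edges (in particular $\ell' = \ell$ and $H$ has no cycles).

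Now I would derive the contradiction with maximality of $\Phi$. Since $H = \Phi \oplus \{\text{edges of }p\}$ decomposes into $|\Phi|+1$ vertex-disjoint augmenting paths of length $\ell$ for $M$, and $p$ is edge-disjoint-structured from $\Phi$ in the sense that $H \supseteq$ these paths, I would argue that $p$ must be vertex-disjoint from all paths of $\Phi$: if $p$ shared a vertex with some path of $\Phi$, the symmetric difference would merge or split components in a way inconsistent with $H$ being exactly $|\Phi|+1$ length-$\ell$ augmenting paths (careful bookkeeping of endpoints is needed here, using that the free vertices of $p$ are free in $M'$ hence cannot be endpoints of paths in $\Phi$, which are matched by $M'$). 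But a length-$\ell$ augmenting path for $M$ that is vertex-disjoint from every path in $\Phi$ contradicts the maximality of $\Phi$ as an independent set of augmenting paths of length $\ell$. Hence no such $p$ exists, i.e. the shortest augmenting path with respect to $M'$ has length strictly greater than $\ell$.

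The main obstacle is the combinatorial case analysis in the last step: showing that the ``extra'' augmenting path $p$ can be extracted as a genuine new length-$\ell$ augmenting path disjoint from $\Phi$, rather than being smeared across several components of $H$. The degree-$\le 2$ structure of $H$ and the exact edge-count equality (which pins down $H$ completely) are what make this tractable; I expect the cleanest route is to track, for each connected component of $H$, how many edges of $\Phi$ and of $p$ it contains and to use the surplus-counting identity to conclude that the component containing the relevant part of $p$ is itself a single length-$\ell$ augmenting path avoiding $\Phi$.
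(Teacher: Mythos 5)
The paper does not prove this lemma at all --- it is imported verbatim from Hopcroft and Karp \cite{HP73} with a citation only --- so there is no internal proof to compare against; what you have reconstructed is essentially the classical Hopcroft--Karp phase argument, and it is sound in outline: set $M''=M\oplus\Phi\oplus p$, observe $H:=M\oplus M''=\Phi\oplus p$, use the surplus of non-$M$ edges (which equals $|M''|-|M|=|\Phi|+1$) to extract at least $|\Phi|+1$ vertex-disjoint $M$-augmenting paths inside $H$, compare their total length $\geq(|\Phi|+1)\ell$ with $|E(H)|\leq|\Phi|\ell+\ell'$, and convert the forced equality into a contradiction with the maximality of $\Phi$.

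Two steps need repair, though neither breaks the argument. First, your justification that $H$ has maximum degree $2$ is wrong as stated: a path of $\Phi$ (and likewise $p$) contributes degree $2$, not $\leq 1$, at its interior vertices. The correct reason is already implicit in your own identity $H=M\oplus(M'\oplus p)$: since $p$ augments the matching $M'$, the set $M''=M'\oplus p$ is again a matching, and the symmetric difference of two matchings has maximum degree $2$. (The same remark cleans up your surplus count: argue via $|M''\setminus M|-|M\setminus M''|=|M''|-|M|=|\Phi|+1$ rather than via per-path contributions of $\Phi$ and $p$, which is loose when edges cancel.) Second, the step you flag as the main obstacle --- that $p$ is vertex-disjoint from $\Phi$ --- needs no component-merging bookkeeping once equality holds: equality forces $E(p)\cap E(\Phi)=\emptyset$, so at any shared vertex $v$ the degrees add, and $\deg_H(v)\leq 2$ forces $v$ to be an endpoint of $p$ and an endpoint of some path of $\Phi$; but endpoints of $p$ are free with respect to $M'$, whereas every vertex of a path of $\Phi$ is matched by $M'$ --- exactly the observation you already made. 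With these two lines the proof closes: $p$ is a length-$\ell$ $M$-augmenting path vertex-disjoint from $\Phi$, contradicting maximality, so every $M\oplus\Phi$-augmenting path has length strictly greater than $\ell$.
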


The second lemma shows that if there are no short augmenting paths then
the current matching is approximately optimal.

\begin{lemma}\cite{HP73} \label{lemma:hp2}
Let $G=(V,E)$ be an undirected graph. Let $M$ be some matching in
$G$, and let $M^*$ be a maximum matching in $G$. If the shortest
augmenting path with respect to $M$ has length $2k - 1 > 1$ then
$|M| \geq (1-1/k)|M^*|$.
\end{lemma}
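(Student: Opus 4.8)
The target statement is the classic Hopcroft--Karp bound: if the shortest augmenting path with respect to $M$ has length $2k-1$, then $|M| \ge (1-1/k)|M^*|$. The plan is to analyze the symmetric difference $M \oplus M^*$, which is a standard tool here. Since both $M$ and $M^*$ are matchings, each vertex has degree at most $2$ in the graph $(V, M \oplus M^*)$, so this graph decomposes into vertex-disjoint simple paths and even-length cycles, with edges alternating between $M$ and $M^*$ along each component. Cycles and even-length paths contain equally many $M$- and $M^*$-edges and therefore do not contribute to $|M^*| - |M|$; only the odd-length paths with more $M^*$-edges than $M$-edges do, and each such path is an augmenting path with respect to $M$.

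The key counting step is as follows. Let $t = |M^*| - |M|$; I may assume $t \ge 1$, since otherwise there is nothing to prove (in fact $|M|\ge|M^*|$ forces equality and the bound is trivial). Then $M \oplus M^*$ contains at least $t$ vertex-disjoint augmenting paths with respect to $M$ (one for each unit of the surplus, coming from the odd components). These paths are pairwise vertex-disjoint by construction, and each has length at least $2k-1$ by the hypothesis that the shortest augmenting path has length $2k-1$. An augmenting path of length $2k-1$ uses exactly $k$ edges not in $M$ and $k-1$ edges in $M$, hence contains $k-1$ edges of $M$; more generally a longer augmenting path contains at least $k-1$ edges of $M$. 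Since the $t$ augmenting paths are vertex-disjoint, the sets of $M$-edges they use are disjoint, so $|M| \ge t(k-1)$.

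Combining $|M| \ge t(k-1)$ with $t = |M^*| - |M|$ gives $|M| \ge (|M^*| - |M|)(k-1)$, i.e. $k|M| \ge (k-1)|M^*|$, which is exactly $|M| \ge (1 - 1/k)|M^*|$. The only mild subtlety — the step I would be most careful about — is the claim that the surplus $t$ is realized by at least $t$ disjoint augmenting paths: this needs the observation that every component of $M\oplus M^*$ contributes at most one to $|M^*|-|M|$ (each odd path contributes exactly $+1$ or $-1$, each cycle/even path contributes $0$), so getting a net surplus of $t$ requires at least $t$ augmenting (surplus-$+1$) paths. Everything else is routine, and the hypothesis $2k-1 > 1$ just ensures $k \ge 2$ so that $k-1 \ge 1$ and the bound is meaningful.
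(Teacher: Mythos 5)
Your proof is correct. The paper itself gives no proof of this lemma --- it is cited directly from Hopcroft and Karp \cite{HP73} --- and your symmetric-difference argument (decomposing $M \oplus M^*$ into alternating paths and cycles, extracting $t=|M^*|-|M|$ vertex-disjoint augmenting paths, and charging at least $k-1$ edges of $M$ to each) is essentially the standard argument from that source, carried out correctly, including the counting step $|M|\geq t(k-1)$ and the trivial case $t\leq 0$.
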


Lotker et al.,  \cite{LPP08} gave the following abstract
approximation scheme for maximal matching in the distributed
setting.\footnote{This approach was first used by Hopcroft and Karp in \cite{HP73};
however, they only applied it
efficiently in the bipartite setting.} Start with an empty matching.
In stage $\ell = 1,3,\ldots, 2k-1$, add a maximal independent
collection of augmenting paths of length $\ell$. For $k=\lceil
1/\eps \rceil$, by Lemma \ref{lemma:hp2}, we have that the matching
$M_{\ell}$ is a $(1-\epsilon)$-approximation to the maximum
matching.

In order to find such a collection of augmenting paths of
length $\ell$, we need to define a conflict graph:
\begin{definition}\cite{LPP08} Let $G = (V,E)$ be an undirected graph, let $M \subseteq E$
be a matching, and let $\ell > 0$ be an integer.  The
$\ell$-conflict graph with respect to $M$ in $G$, denoted
$C_M(\ell)$, is defined as follows. The nodes of $C_M(\ell)$ are all
augmenting paths of length  $\ell$, with respect to $M$, and
two nodes in $C_M(\ell)$ are connected by an edge if and only if
their corresponding augmenting paths intersect at a vertex of
$G$.\footnote{Notice that the nodes of the  conflict graph represent
\emph{paths} in $G$. Although it should be clear from the context,
in order to minimize confusion, we refer to a vertex in $G$ by
\emph{vertex}, and to a vertex in the conflict graph by
\emph{node}.}
\end{definition}

We present the abstract distributed algorithm of \cite{LPP08},  {\bf
AbstractDistributedMM}.

\begin{algorithm}
\caption{ - {\bf AbstractDistributedMM} - Abstract distributed algorithm with input $G=(V,E)$ and $\eps>0$}\label{alg1}
\begin{algorithmic}[1]
\State $M_{-1}\gets \emptyset$\Comment{$M_{-1}$ is the empty matching}
\State $k\gets \lceil 1/\eps \rceil$
\For{$\ell\gets 1, 3, \ldots, 2k-1$,} \label{forloop}
    \State Construct the conflict graph $C_{M_{\ell-2}}(\ell)$\label{line:constr}
    \State Let $\mathcal{I}$ be an MIS of $C_{M_{\ell-2}}(\ell)$\label{line:MIS}
    \State Let $\Phi(M_{\ell-2})$ be the union of augmenting paths corresponding to $\mathcal{I}$ \label{line:P}
    \State $M_{\ell}\gets M_{\ell-2} \oplus \Phi(M_{\ell-2})$\Comment{$M_{\ell}$ is matching at the end of phase $\ell$} \label{line:last}
\EndFor
\State Output $M_{\ell}$\Comment{$M_{\ell}$ is a $(1-\frac{1}{k+1})$-approximate maximum matching}
\end{algorithmic}
\end{algorithm}


%

Note that for $M_\ell$, the minimal augmenting path is of length at
least $\ell+2$.
This follows since $\Phi(M_{\ell-2})$ is a maximal independent set of
augmenting paths of length $\ell$. When we add $\Phi(M_{\ell-2})$ to $M_{\ell-2}$,
to get $M_\ell$, by Lemma \ref{lemma:hp1} all the remaining augmenting paths are of length at least
$\ell+2$ (recall that augmenting paths have odd lengths).

Lines \ref{line:constr} - \ref{line:last} do the task of computing
$M_\ell$ as follows: the conflict graph $C_{M_{\ell-2}}(\ell)$ is
constructed and an MIS, $\Phi(M_{\ell-2})$, is found in it. 
$\Phi(M_{\ell-2})$ is then used to augment $M_{\ell-2}$, to give
$M_{\ell}$.

We would like to simulate this algorithm locally. Our main challenge
is to simulate Lines \ref{line:constr} - \ref{line:last} without
explicitly constructing the entire conflict graph $C_{M_{\ell-2}}(\ell)$.
To do this, we will simulate an on-line MIS algorithm.

\subsection{Local Simulation of the On-Line Greedy MIS Algorithm}

In the on-line setting, the vertices arrive in some unknown
order, and {\bf GreedyMIS} operates as follows: Initialize the set $I =
\emptyset$. When a vertex $v$ arrives, {\bf GreedyMIS} checks whether any of
$v$'s neighbors, $N(v)$, is in $I$. If none of them are, $v$ is added to
$I$. Otherwise, $v$ is not in $I$. (The pseudocode for {\bf GreedyMIS} can be found in the full version of the paper.)

In order to simulate {\bf GreedyMIS} locally, we first need to fix
the order (of arrival) of the vertices, $\pi$.
If we know that each query depends on at most $k$ previous queries, we do not need to  explicitly generate the order $\pi$ on
all the vertices (as this would take at least linear time). By
Theorem \ref{thm:orderings}, we can produce a
$\frac{1}{n^2}$-almost-$k$-wise independent random ordering on the
edges, using a seed, $s$, of length $O(k\log^2{n})$.

Technically, this is done as follows.
Let $r$ be a function $r:(v,s)\rightarrow [cn^{4}]$, for some
constant $c$.\footnote{Alternately, we sometimes view $r$ as a
function $r:(v,s)\rightarrow [0,1]$: Let $r'$ be a function
$r':(v,s)\rightarrow [cn^{4}]$, and let $f:[cn^4]\rightarrow [0,1]$
be a function that maps each $x \in [cn^4]-\{1\}$ uniformly at
random to the interval $((x-1)/cn^4, x/cn^4]$, and $f$ maps $1$ uniformly at random to the interval
$[0,1/cn^4]$. Then set $r(v,s) = f(r'(v,s))$.} The vertex order
$\pi$ is determined as follows: vertex $v$ appears before vertex $u$
in the order $\pi$  if $r(v,s)<r(u,s)$. Let $G'=(V', E')$ be the
subgraph of $G$ induced by the vertices $V'\subseteq V$; we denote
by $\pi(G', s)$ the partial order of $\pi$ on $V'$. Note that we
only need to store $s$ in the memory: we can then compute, for any
subset $V'$, the induced order of their arrival.

When simulating {\bf GreedyMIS} on the conflict graph $C_M(\ell) =
(V_{C_M}, E_{C_M})$, we only need a subset of the nodes, $V' \subseteq
V_{C_M}$. Therefore, there is
no need to construct $C_M(\ell)$ entirely; only the relevant
subgraph need be constructed. This is the main observation which
allows us to bound the space and time required by our algorithm.


\subsection{LCA for Maximal Matching}

We present our algorithm for maximal matching -  {\bf LocalMM},
and analyze it.  (The pseudocode for {\bf LocalMM} can be found in 
the full version of the paper.)
In contrast to the distributed algorithm, which runs iteratively,
 {\bf LocalMM} is recursive in nature. In each iteration of
{\bf AbstractDistributedMM}, a maximal matching $M_\ell$, is
computed, where $M_\ell$ has no augmenting path of
length less than $\ell$. We call each such iteration
a \emph{phase}, and there are a total of $k$ phases: $1, 3, \ldots
2k-1$. To find out whether an edge $e \in E$ is in $M_\ell$, we
recursively compute whether it is in $M_{\ell-2}$ and whether it is
in $\Phi(M_{\ell-2})$, a maximal set of augmenting paths of length
$\ell$. We use the following simple observation to determine whether
$e \in M_{\ell}$. The observation follows since $M_{\ell}\gets M_{\ell-2}
\oplus \Phi(M_{\ell-2})$.

\begin{obs}
\label{obs:notboth} $e \in M_{\ell}$ if and only if it is in
either in  $M_{\ell-2}$ or in $\Phi(M_{\ell-2})$, but not in both.
\end{obs}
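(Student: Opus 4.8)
The plan is to derive the claim directly from the definition of the symmetric difference operator, since $M_\ell$ is defined in Line~\ref{line:last} of {\bf AbstractDistributedMM} to be exactly $M_{\ell-2} \oplus \Phi(M_{\ell-2})$. Recall that for any two sets $A$ and $B$ we have $A \oplus B = (A \cup B) \setminus (A \cap B)$. First I would fix the bookkeeping convention that $\Phi(M_{\ell-2})$ denotes the \emph{edge set} obtained by taking the union of the edges of the augmenting paths corresponding to the MIS $\mathcal{I}$ of the conflict graph $C_{M_{\ell-2}}(\ell)$, so that the symmetric difference with the edge set $M_{\ell-2}$ is between two subsets of $E$ and is therefore well-defined.

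With this in place, for an edge $e \in E$ the membership $e \in M_\ell = M_{\ell-2} \oplus \Phi(M_{\ell-2})$ is, by the definition above, equivalent to the conjunction ``$e \in M_{\ell-2} \cup \Phi(M_{\ell-2})$ and $e \notin M_{\ell-2} \cap \Phi(M_{\ell-2})$''. The first conjunct says that $e$ belongs to at least one of $M_{\ell-2}$ and $\Phi(M_{\ell-2})$; the second says that $e$ does not belong to both. Taken together, these two conditions assert precisely that $e$ belongs to exactly one of $M_{\ell-2}$ and $\Phi(M_{\ell-2})$, which is the statement of the observation.

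There is essentially no obstacle here: the observation is a pure rewriting of set-theoretic definitions and uses no structural property of matchings, augmenting paths, or the conflict graph (in particular it does not rely on Lemmas~\ref{lemma:hp1} or~\ref{lemma:hp2}). The only point requiring any care is the convention in the first paragraph---viewing a collection of vertex-disjoint paths as the set of their edges---and once that is settled the equivalence is immediate, which is why the paper records it as an observation rather than proving it in detail. In the body of the LCA this observation is what lets the recursion answer an edge-query for $M_\ell$ by separately answering the query for $M_{\ell-2}$ and for $\Phi(M_{\ell-2})$ and then returning the exclusive-or of the two answers.
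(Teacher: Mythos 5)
Your proof is correct and matches the paper's justification, which simply notes that the observation follows from the definition $M_{\ell} = M_{\ell-2} \oplus \Phi(M_{\ell-2})$ as a symmetric difference of edge sets. Your additional remark that $\Phi(M_{\ell-2})$ is viewed as the union of the edges of the chosen augmenting paths is consistent with Line~\ref{line:P} of {\bf AbstractDistributedMM}, so nothing is missing.
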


Recall that {\bf LocalMM} receives an edge $e \in E$ as a query, and
outputs ``yes/no''. To determine whether $e \in M_{2k-1}$, it
therefore suffices to determine, for $\ell=1,3, \ldots, 2k-3$, whether
$e \in M_{\ell}$ and  whether $e \in \Phi(M_{\ell})$.

We will outline our algorithm by tracking a single query. (The initialization parameters will be explained at the
end.) When queried on an
edge $e$, {\bf LocalMM} calls the procedure
\textsc{IsInMatching} with  $e$ and the number of phases
$k$.  For clarity, we sometimes omit some of the parameters from the descriptions of the procedures.\\

\paragraph{Procedure \textsc{IsInMatching}} determines whether an edge
$e$ in in the matching $M_\ell$. To determine
whether $e\in M_{\ell}$, \textsc{IsInMatching} recursively checks whether $e \in
M_{\ell-2}$, by calling \textsc{IsInMatching}$(\ell-2)$, and
whether $e$ is in some path in the MIS $\Phi(M_{\ell-2})$ of
$C_{M_{\ell-2}}(\ell)$. This is done by generating all paths $p$ of
length $\ell$ that include $e$, and calling \textsc{IsPathInMIS}$(p)$ on each.  \textsc{IsPathInMIS}$(p)$ checks whether
$p$ is an augmenting path, and if so,  whether it in the independent set of
augmenting paths.  By
Observation \ref{obs:notboth}, we
can compute whether $e$ is in $M_\ell$ given the output of the calls.\\

\paragraph{Procedure \textsc{IsPathInMIS}} receives a path $p$ and
returns whether the path is in the MIS of augmenting paths of length
$\ell$. The procedure first computes all the relevant  augmenting
paths (relative to $p$) using \textsc{RelevantPaths}.
%
Given the set of relevant paths (represented by nodes) and the
intersection between them (represented by edges) we simulate {\bf
GreedyMIS} on this subgraph. The resulting independent set is a set
of independent augmenting paths. We then just need to check if the path
$p$ is in that set. \\

\paragraph{Procedure \textsc{RelevantPaths}} receives a path $p$ and
returns all the relevant augmenting paths relative to $p$.
The procedure returns the subgraph of $C_{M_{\ell-2}}(\ell), C = (V_C,
E_C)$, which includes $p$ and all the relevant nodes. These are exactly the nodes needed for the
simulation of {\bf GreedyMIS}, given the order induced by seed
$s_\ell$.
The set of augmenting paths $V_C$ is constructed iteratively, by
adding an augmenting path $q$ if it intersects some path $q'\in V_C$
and arrives before it (i.e., $r(q,s_\ell)<r(q',s_\ell)$). In order to determine whether to add  path $q$ to $V_C$, we
need first to test if $q$ is indeed a valid augmenting path, which is done
using \textsc{IsAnAugmentingPath}.\\

\paragraph{Procedure \textsc{IsAnAugmentingPath}} tests if a given
path $p$ is an augmenting path. It is based on the following
observation.

\begin{obs}
\label{obs:cons1} For any graph $G=(V,E)$, let $M$ be a matching in
$G$, and let $p=e_1, e_2, \ldots, e_\ell$ be a path in $G$. Path $p$
is an augmenting path with respect to $M$ if and only if all odd numbered edges
are not in $M$, all even numbered edges are in $M$, and both the
vertices at the ends of $p$ are free.
\end{obs}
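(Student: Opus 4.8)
The plan is to unpack the definition of augmenting path given in Section~\ref{sect:prel}: a path is augmenting with respect to $M$ if and only if it is a simple path, both of its endpoints are free, and its edges alternate between $E \setminus M$ and $M$. The only content of the observation beyond this definition is the elementary fact that the alternation is forced to start with a non-matching edge, so that odd-indexed edges lie in $E\setminus M$ and even-indexed edges lie in $M$.

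For the forward direction, suppose $p=e_1,\ldots,e_\ell$ is an augmenting path with respect to $M$. Let $v_0$ be the endpoint of $p$ incident to $e_1$. Since $v_0$ is free, no edge of $M$ is incident to $v_0$, and in particular $e_1 \notin M$, i.e.\ $e_1 \in E\setminus M$. A straightforward induction on $i$ using the alternation property then shows $e_i \in E\setminus M$ for odd $i$ and $e_i \in M$ for even $i$. Combined with the hypothesis that both endpoints of $p$ are free, this is exactly the stated condition.

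For the converse, suppose all odd-numbered edges of $p$ are outside $M$, all even-numbered edges are in $M$, and both endpoints of $p$ are free. Any two consecutive edges $e_i$ and $e_{i+1}$ have opposite parities, so exactly one of them lies in $M$; hence the edges of $p$ alternate between $E\setminus M$ and $M$. Since $p$ is a simple path whose endpoints are free and whose edges alternate, it is by definition an augmenting path with respect to $M$.

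For consistency I would also note that these conditions force $\ell$ to be odd, matching the fact that augmenting paths have odd length: if $\ell$ were even, then $e_\ell \in M$ would be incident to an endpoint of $p$, contradicting that this endpoint is free, so in that case both sides of the equivalence are vacuously false. There is essentially no obstacle here; the only point requiring (minimal) care is the use of the free-endpoint condition to pin down on which side of the alternation the edge $e_1$ falls.
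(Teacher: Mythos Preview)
Your proof is correct. The paper does not actually give a proof of this observation; it treats the statement as immediate from the definition of augmenting path in Section~\ref{sect:prel} and moves on to use it as the specification for \textsc{IsAnAugmentingPath}. Your argument simply spells out the one non-tautological point the paper leaves implicit, namely that the free-endpoint condition forces $e_1\in E\setminus M$ and hence pins down the parity of the alternation; this is exactly the right thing to say.
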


Given a path $p$ of length $\ell$, to determine whether $p \in
C_{M_{\ell-2}}(\ell)$, \textsc{IsAnAugmentingPath}$(\ell)$ determines, for each edge in the path,
whether it is in $M_{\ell-2}$, by calling \textsc{IsInMatching}$(\ell-2)$.
It also checks whether the end vertices are free, by calling Procedure
\textsc{IsFree}$(\ell)$, which checks, for each vertex, if any of its
adjacent edges are in $M_{\ell-2}$. From Observation \ref{obs:cons1},
\textsc{IsAnAugmentingPath}$(\ell)$ correctly determines whether $p$ is an
augmenting with respect to $M_{\ell-2}$.

We end by describing the initialization procedure \textsc{Initialize}, which is run only once, during the first query. The procedure sets the number of phases to $\lceil 1/\eps \rceil$. It is important to set a different seed $s_{\ell}$ for each phase $\ell$, since the conflict graphs are unrelated (and even the size of the description of each node, a path of length $\ell$, is different). The lengths of the $k$ seeds, $s_1, s_3, \ldots, s_{2k-1}$, determine our memory requirement.

\subsection{Bounding the Complexity}

In this section we prove Theorem \ref{thm:main}.
We start with the following observation:
\begin{obs}
\label{obs:CG} In any graph $G=(V,E)$ with bounded degree $d$, each
edge $e \in E$ can be part of at most $\ell(d-1)^{\ell-1}$ paths of
length $\ell$. Furthermore, given $e$, it takes at most
$O(\ell(d-1)^{\ell-1})$ time to find all such paths.
\end{obs}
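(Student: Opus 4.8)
The plan is to prove both parts by a single canonical decomposition of a length-$\ell$ path around the edge $e=(u,v)$. Given any simple path $p$ of length $\ell$ that contains $e$, deleting $e$ splits $p$ into two (possibly trivial) sub-paths, one starting at $u$ and one starting at $v$. Let $j$ be the number of edges of the $u$-side sub-path; then $j\in\{0,1,\ldots,\ell-1\}$, the $v$-side sub-path has $\ell-1-j$ edges, and — crucially — $j$ is determined by $p$, so in the count below each $p$ (together with its reversal) is counted exactly once rather than twice.

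For the counting bound, I would fix $j$ and bound the number of admissible $u$-side sub-paths $u=w_0,w_1,\ldots,w_j$: here $w_1$ is a neighbor of $u$ other than $v$ (at most $d-1$ choices) and each later $w_t$ is a neighbor of $w_{t-1}$ other than $w_{t-2}$ (at most $d-1$ choices), so there are at most $(d-1)^j$ of them; symmetrically there are at most $(d-1)^{\ell-1-j}$ admissible $v$-side sub-paths. Dropping the requirement that the two halves be vertex-disjoint only increases the count, so the number of length-$\ell$ paths through $e$ whose $u$-side has $j$ edges is at most $(d-1)^j(d-1)^{\ell-1-j}=(d-1)^{\ell-1}$; summing over the $\ell$ choices of $j$ gives $\ell(d-1)^{\ell-1}$.

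For the time bound, I would enumerate exactly these paths by a depth-bounded search that realizes the decomposition: for each $j$, run a DFS of depth $j$ from $u$ that never steps onto $v$, and from each partial path so produced run a DFS of depth $\ell-1-j$ from $v$, outputting the concatenation whenever the two halves are disjoint (checked against an incrementally maintained set of used vertices). The search tree has $\sum_{j=0}^{\ell-1}\big((d-1)^j+(d-1)^j(d-1)^{\ell-1-j}\big)=O(\ell(d-1)^{\ell-1})$ nodes — the geometric sums being dominated by their last term when $d\ge 3$, and being $O(\ell)$ when $d=2$ — and the current path and used-vertex set are maintained incrementally so that each node costs $O(1)$ amortized work, giving total time $O(\ell(d-1)^{\ell-1})$.

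The points to be careful about are that the $j$-decomposition really is a bijection onto triples (sub-path from $u$, sub-path from $v$, with $j$ recoverable from $p$), including the degenerate endpoints $j=0$ and $j=\ell-1$ where one half is the single vertex $u$ or $v$; and that the cost of writing out the $\ell$ vertices of each output path does not inflate the running time to $O(\ell^2(d-1)^{\ell-1})$, which is why the enumeration extends paths incrementally rather than rebuilding them. Neither is a genuine obstacle — the statement is essentially a careful branching count.
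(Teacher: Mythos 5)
Your proof is correct and follows essentially the same argument as the paper: fixing the position of $e$ within the path (your parameter $j$ is just the paper's choice of which of the $\ell$ positions $e$ occupies) and bounding the extensions by at most $d-1$ choices per remaining edge, giving $\ell(d-1)^{\ell-1}$. The extra care you take with the DFS enumeration and the output cost is a welcome elaboration of the time bound, which the paper states without detail, but it is not a different approach.
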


\begin{proof}
Consider a path $p=(e_1, e_2, \ldots, e_{\ell})$ of length $\ell$.
If $p$ includes the edge $e$, then $e$ can be in one of the $\ell$
positions. Given that $e_i=e$, there are at most $d-1$ possibilities
for $e_{i+1}$ and for $e_{i-1}$, which implies at most
$(d-1)^{\ell-1}$ possibilities to complete the path to be of length
$\ell$.\qed
\end{proof}

Observation \ref{obs:CG} yields the following corollary.
\ignore{\begin{corollary} \label{corr:recurs} Procedure
\textsc{IsInMatching} makes at most $\ell\cdot(d-1)^{\ell}$  calls
to Procedure \textsc{IsPathInMIS}.
\end{corollary}}

\begin{corollary}
\label{corr:CGsize}\label{corr:degree}
The $\ell$-conflict graph with respect to any matching $M$ in $G=(V,E)$,
$C_M(\ell)$, consists of at most $\ell(d-1)^{\ell-1}|E| = O(|V|)$
nodes, and has maximal degree at most $d(\ell+1)
\ell(d-1)^{\ell-1}$.
\end{corollary}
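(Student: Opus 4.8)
The plan is to derive both bounds directly from Observation \ref{obs:CG}. For the bound on the number of nodes: the nodes of $C_M(\ell)$ are augmenting paths of length $\ell$, which in particular are paths of length $\ell$ in $G$. Each such path contains $\ell$ edges, and by Observation \ref{obs:CG} each edge $e \in E$ lies on at most $\ell(d-1)^{\ell-1}$ paths of length $\ell$. Summing over all edges, the total number of (path, edge-on-path) incidences is at most $|E| \cdot \ell (d-1)^{\ell-1}$, and since every path has at least one edge, the number of distinct length-$\ell$ paths — hence the number of nodes of $C_M(\ell)$ — is at most $\ell(d-1)^{\ell-1}|E|$. Because $G$ has bounded degree $d$ we have $|E| \le dn/2 = O(|V|)$, and $\ell, d$ are constants (here $\ell \le 2k-1$ with $k = \lceil 1/\eps\rceil$ fixed), so this is $O(|V|)$.

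For the degree bound: fix a node of $C_M(\ell)$, i.e. an augmenting path $p = (v_0, v_1, \ldots, v_\ell)$ on $\ell+1$ vertices of $G$. A neighbor of $p$ in $C_M(\ell)$ is another length-$\ell$ augmenting path $q$ sharing at least one vertex of $G$ with $p$. Any vertex $v_i$ of $p$ has at most $d$ incident edges in $G$, so at most $d$ edges of $G$ touch $v_i$; hence the edges of $G$ meeting $p$ number at most $d(\ell+1)$. Any path $q$ intersecting $p$ at a vertex must contain at least one edge incident to that vertex, so $q$ contains one of these at most $d(\ell+1)$ edges. By Observation \ref{obs:CG} again, each such edge is on at most $\ell(d-1)^{\ell-1}$ length-$\ell$ paths, so the number of candidate paths $q$ — and in particular the number of neighbors of $p$ — is at most $d(\ell+1)\cdot\ell(d-1)^{\ell-1}$, as claimed.

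I don't expect a genuine obstacle here; the only points requiring a little care are (i) making sure the counting for the node bound does not double-count paths (handled by the incidence argument above, which only gives an upper bound and so is safe) and (ii) making sure the degree argument correctly counts paths sharing a \emph{vertex} rather than an \emph{edge} — this is why the factor $(\ell+1)$ (number of vertices on $p$) appears together with $d$ (edges per vertex), rather than just $\ell$. One should also remark that the bound is stated for an arbitrary matching $M$, which is fine since neither estimate uses any property of $M$ beyond the fact that nodes of $C_M(\ell)$ are length-$\ell$ paths in $G$ and edges of $C_M(\ell)$ correspond to vertex intersections in $G$.
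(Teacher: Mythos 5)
Your proof is correct and follows essentially the same route as the paper: both bounds are read off from Observation \ref{obs:CG}, with the degree bound obtained exactly as in the paper's proof (a path has $\ell+1$ vertices, each contributing at most $d$ incident edges, and each such edge lies on at most $\ell(d-1)^{\ell-1}$ length-$\ell$ paths). Your write-up merely spells out the node-count and the $O(|V|)$ step, which the paper leaves implicit.
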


\begin{proof}
(For the degree bound.) Each path has length $\ell$, and therefore has $\ell+1$
vertices. Each vertex has degree at most $d$, which implies $d(\ell+1)$ edges. Each
edge is in at most $\ell(d-1)^{\ell-1}$ paths.\qed
\end{proof}

Our main task will be to compute a bound on the number of recursive
calls. First, let us summarize a recursive call. The only
procedure whose runtime depends on the order induced by $s_\ell$ is
\textsc{RelevantPaths}, which depends on the number of vertices $V_C$ (which
is a random variable depending of the seed $s_\ell$). To simplify
the notation we define the random variable $X_\ell=d(\ell+1)
\ell(d-1)^{\ell-1}|V_C|$.  Technically, {\bf GreedyMIS} also depends on
$V_C$, but its running time is dominated by the running time of
\textsc{RelevantPaths}.\\

\begin{tabular}{|c|c|}
    \hline
Calling procedure & Called Procedures \\
    \hline
\textsc{IsInMatching}($\ell$) & $1 \times$ \textsc{IsInMatching}($\ell-2$) and
$\ell(d-1)^{\ell-1} \times$ \textsc{IsPathInMIS}($\ell$)\\
\textsc{IsPathInMIS}($\ell$) & $1\times$ \textsc{RelevantPaths}($\ell$) and $1\times$ {\bf GreedyMIS}\\
\textsc{RelevantPaths}($\ell$) & $X_\ell \times$ \textsc{IsAnAugmentingPath}($\ell)$ \\
\textsc{IsAnAugmentingPath}($\ell)$ &  $\ell\times$
\textsc{IsInMatching}($\ell-2$) and $2\times$ \textsc{IsFree}($\ell$)\\
\textsc{IsFree}($\ell$) & $(d-1)\times$ \textsc{IsInMatching}($\ell-2$)\\
    \hline
\end{tabular}\\

From the table, it is easy to deduce the following proposition.
\begin{proposition}
\label{proposition:silly} \textsc{IsAnAugmentingPath}($\ell)$
generates at most $\ell+2(d-1)$ calls to
\textsc{IsInMatching}($\ell-2$), and
%
therefore at most $(\ell+2d-2)\cdot \ell(d-1)^{\ell-1}$ calls to
\textsc{IsPathInMIS}($\ell-2$).
\end{proposition}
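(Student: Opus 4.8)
The plan is to read off the two claims directly from the call table and then combine them with Observation~\ref{obs:CG}. First, the table tells us that \textsc{IsAnAugmentingPath}($\ell$) makes $\ell$ direct calls to \textsc{IsInMatching}($\ell-2$) (one per edge of the path, to test membership in $M_{\ell-2}$) and $2$ calls to \textsc{IsFree}($\ell$) (one per endpoint). Each invocation of \textsc{IsFree}($\ell$) in turn makes $(d-1)$ calls to \textsc{IsInMatching}($\ell-2$), since it checks, for a vertex of degree at most $d$, whether any of its adjacent edges lies in $M_{\ell-2}$, and one of those adjacent edges is the path edge already accounted for (hence $d-1$ fresh ones). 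So the total number of calls to \textsc{IsInMatching}($\ell-2$) generated by a single \textsc{IsAnAugmentingPath}($\ell$) call is $\ell + 2(d-1) = \ell + 2d - 2$, which is the first assertion.

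For the second assertion, I would note that each call to \textsc{IsInMatching}($\ell-2$) is, by the first row of the table (with $\ell$ replaced by $\ell-2$), a call that in particular spawns at most $(\ell-2)(d-1)^{\ell-3} \times \textsc{IsPathInMIS}(\ell-2)$ calls — but the proposition as stated bounds the count by $(\ell+2d-2)\cdot\ell(d-1)^{\ell-1}$, so the cleaner reading is: each \textsc{IsInMatching} call at a given level, when it generates \textsc{IsPathInMIS} calls, does so for paths of length $\ell$ through a fixed edge, and by Observation~\ref{obs:CG} there are at most $\ell(d-1)^{\ell-1}$ such paths. Multiplying the $\ell + 2d - 2$ calls to \textsc{IsInMatching}($\ell-2$) by the $\ell(d-1)^{\ell-1}$ bound on \textsc{IsPathInMIS} calls per such invocation gives the stated $(\ell+2d-2)\cdot \ell(d-1)^{\ell-1}$ bound. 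I would be slightly careful to match the indexing: the factor $\ell(d-1)^{\ell-1}$ in the proposition comes from counting length-$\ell$ paths, so the relevant \textsc{IsPathInMIS} calls should be at level $\ell$ (the proposition's ``\textsc{IsPathInMIS}($\ell-2$)'' is using the convention that the level label tracks the matching $M_{\ell-2}$ being augmented).

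The only genuine step here is the endpoint bookkeeping — getting the constant $2(d-1)$ rather than $2d$ right by observing that the path edge incident to an endpoint is not re-queried by \textsc{IsFree} — and then quoting Observation~\ref{obs:CG} for the path-count factor. Everything else is arithmetic on the table entries, so I would keep the write-up to a couple of sentences. The main (mild) obstacle is notational consistency: making sure the phase index $\ell$ versus $\ell-2$ is threaded correctly through \textsc{IsAnAugmentingPath}, \textsc{IsInMatching}, and \textsc{IsPathInMIS}, so that the exponents and linear factors line up exactly with what the table produces.
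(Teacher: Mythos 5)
Your proposal is correct and matches the paper's (implicit) argument: the paper offers no separate proof beyond "read it off the table," and your first part — $\ell$ direct calls plus $2\times(d-1)$ via \textsc{IsFree}, giving $\ell+2(d-1)$ calls to \textsc{IsInMatching}($\ell-2$) — is exactly that, with the factor $\ell(d-1)^{\ell-1}$ then supplied by Observation~\ref{obs:CG}. One correction to your second paragraph: your first instinct was the right one and you should not have abandoned it. Each call to \textsc{IsInMatching}($\ell-2$) enumerates paths of length $\ell-2$ through a fixed edge and hence spawns at most $(\ell-2)(d-1)^{\ell-3}$ calls to \textsc{IsPathInMIS}($\ell-2$) (the level label does track the path length: \textsc{IsPathInMIS}($\ell-2$) works in the conflict graph $C_{M_{\ell-4}}(\ell-2)$, not with length-$\ell$ paths against $M_{\ell-2}$). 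Since $(\ell-2)(d-1)^{\ell-3}\leq \ell(d-1)^{\ell-1}$, multiplying by the $\ell+2d-2$ calls from the first part gives the stated (deliberately loose) bound $(\ell+2d-2)\cdot\ell(d-1)^{\ell-1}$; no reinterpretation of the indexing is needed, and the one you propose contradicts the pseudocode, though it does not affect the validity of the final inequality.
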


We would like to bound $X_{\ell}$, the number of calls to \textsc{IsAnAugmentingPath}$(\ell)$
during a single execution of \textsc{IsPathInMIS}$(G, p, \ell, S)$.
We require the following theorem, the proof of which appears in Section \ref{sect:improv}.

\begin{theorem}
\label{thm:tree}
For any infinite query tree $T$ with bounded degree $d$, there exists a constant $c$, which depends only on $d$, such that for any large enough $N>0$,
\[
Pr[|T|> N]\leq e^{-cN}.
\]
\end{theorem}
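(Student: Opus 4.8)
The plan is to show that the query tree $T$ has exponentially small probability of being large by analyzing it as a subcritical branching-type process. Recall the construction: $\mathcal{T}$ is an infinite $d$-regular tree, each node $w$ gets an i.i.d.\ uniform rank $r(w)\in[0,1]$, and $w\in T$ iff $\parent(w)\in T$ and $r(w)<r(\parent(w))$. The key quantity to control is, for a node at depth $i$ with rank $\rho$, the expected number of its descendants in $T$. Conditioned on the root rank being $\rho$, the process of ``surviving'' descendants is a multi-type branching process indexed by the rank, and a node of rank $\rho$ has, in expectation, $d\rho$ surviving children (each of the $d$ children independently has rank below $\rho$ with probability $\rho$). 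Since the child ranks are themselves uniform on $[0,\rho]$, iterating this gives an expected level-$i$ population of roughly $d^i/i!$ (integrating $d\rho$ against the appropriate density at each level), so $\Exp[|T|]\le\sum_i d^i/i!=e^d<\infty$. This expectation bound is known (it is essentially the Mansour et al.\ $O(\log n)$ bound's starting point), but here we need the full exponential tail, not just the expectation.

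First I would make the branching structure precise: define $T_i$ to be the number of nodes of $T$ at depth exactly $i$, and observe that $|T|=\sum_{i\ge0}T_i$. The heart of the argument is to get a bound on the moment generating function $\Exp[e^{t T_i}]$ or, more robustly, to set up a supermartingale. A clean route: for a parameter $\lambda>0$ to be chosen, track the ``weighted potential'' $\Phi = \sum_{w\in T} \lambda^{\mathrm{depth}(w)}$ or, better, condition on ranks and bound $\Pr[T_i\ge m]$ directly. Since each node of $T$ at depth $i$ with rank $\rho$ produces a number of depth-$(i+1)$ children in $T$ that is stochastically dominated by $\mathrm{Binomial}(d,\rho)$, and the surviving children have ranks that are only \emph{smaller}, the whole tree $T$ is stochastically dominated (in size) by a Galton--Watson-like tree where the offspring distribution shrinks geometrically with depth. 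I would formalize domination by coupling: reveal ranks along a BFS of $\mathcal{T}$, and use that the ranks of nodes kept in $T$ form a decreasing sequence along every root-leaf path.

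The main technical step is then a tail bound for this depth-attenuated branching process. I would prove by induction on $i$ (from the leaves up, or via a generating-function fixed-point argument) that there is a constant $c=c(d)$ with $\Pr[T_i\ge s]\le e^{-c\,s\,i}$ or similar, and more importantly $\Pr\big[\sum_{j\le i}T_j\ge N\big]\le e^{-cN}$ uniformly in $i$, hence letting $i\to\infty$ gives $\Pr[|T|>N]\le e^{-cN}$ for $N$ large enough. Concretely, I expect to use the standard Chernoff/Bernstein argument: write $|T|$ as the total progeny of a branching process with mean offspring $<1$ at all but finitely many depths (indeed mean offspring at depth $i$ is $\Exp[d\rho_i]$ which decays like $d/i$ and is below $1$ once $i>d$), truncate the first $O(d)$ levels (which contribute at most $d^{O(d)}=O(1)$ nodes deterministically, or with exponential tail by a crude union bound), and apply a classical large-deviation bound for the total progeny of a subcritical branching process — e.g., via the identity that total progeny $Z$ satisfies $\Exp[e^{\theta Z}]<\infty$ for small $\theta>0$ when the offspring m.g.f.\ is finite and the mean is $<1$, then Markov's inequality $\Pr[Z>N]\le e^{-\theta N}\Exp[e^{\theta Z}]$.

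The step I expect to be the main obstacle is making the ``depth-dependent mean decays'' argument interact cleanly with a single exponential-moment bound: a genuinely subcritical branching process has an exponential progeny tail by classical results, but our process is only \emph{eventually} subcritical (supercritical-ish for depths below $\approx d$), and the offspring distributions are not identical across depths, so I cannot cite a textbook theorem verbatim. The fix I plan is to split $T$ at depth $h:=\lceil 2d\rceil$: the top part has size at most $d^h$ (a constant depending only on $d$), and below depth $h$ every node's rank is below $1/2$ in expectation — more carefully, I would show the subtrees hanging off depth-$h$ nodes are each dominated by a truly subcritical Galton--Watson tree (offspring mean $\le 1/2$, offspring bounded by $d$, hence finite exponential moments), so each such subtree's size has tail $e^{-c' s}$, and summing at most $d^h$ of them and invoking independence (the subtrees are conditionally independent given the depth-$h$ ranks, and the domination holds pointwise) yields the overall bound $\Pr[|T|>N]\le e^{-cN}$ for $N\ge N_0(d)$. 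The bookkeeping — choosing $h$, verifying the $1/2$-subcriticality after conditioning on the ancestor ranks being in $[0,\rho]$ rather than $[0,1]$, and combining the constant-size top with the exponential-tail bottom — is routine but is where the real work lies.
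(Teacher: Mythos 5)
There is a genuine gap, and it sits exactly at the step you flag as the ``main obstacle'' and then dismiss as routine bookkeeping. Your plan is to cut the tree at a constant depth $h=h(d)$ and dominate each subtree hanging off a depth-$h$ node by a truly subcritical Galton--Watson tree. But the only uniform domination available for the subtree below a node of rank $\rho$ is a GW tree with offspring law $\mathrm{Bin}(d,\rho)$, and this is subcritical only when $\rho<1/d$. Conditioned on a depth-$h$ node surviving, its rank is distributed like the minimum of $h$ uniforms, so $\Pr[\rho\geq 1/d\mid \text{alive}]=(1-1/d)^{h}$, which for any constant $h$ (e.g.\ $h=\lceil 2d\rceil$, giving roughly $e^{-2}$) is a \emph{constant}, not exponentially small in $N$. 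On that event your domination is by a critical or supercritical GW tree, whose total progeny has no exponential tail (indeed is infinite with positive probability), so the argument as described can only yield a constant bound on $\Pr[|T|>N]$, not $e^{-cN}$. The fact that the rank at depth $h$ is small \emph{in expectation} (about $2^{-h}$, or offspring mean about $d/h$) does not rescue the pointwise domination you need. Making $h$ grow with $N$ destroys the ``constant-size top'' half of the split, and replacing the depth cutoff by a rank cutoff at $1/(2d)$ does not make the top part subcritical either (its offspring mean can be close to $d-\tfrac12$); the natural repair is to iterate rank cutoffs, which leads you to a different decomposition.

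That different decomposition is what the paper actually does: it partitions the rank space $[0,1]$ into $L\geq d+1$ intervals (``levels'') of width at most $1/(d+1)$, so that the subtree of same-level nodes is genuinely subcritical (offspring mean at most $d/(d+1)$), and invokes Lemma~\ref{lemma:mrvx} of \cite{MRVX12} for the exponential tail of each single-level subtree. The remaining --- and genuinely nontrivial --- work is composing these tails across the $L$ levels: the number of level-$(i+1)$ subtrees is controlled by the total size $Y_i$ of the first $i$ levels, and Proposition~\ref{prop:Z} bounds, by a counting argument over the $\binom{x_i+y_i}{y_i}$ ways of splitting mass among subtrees, the probability that level $i+1$ is large given $Y_i$; an induction over the constantly many levels (Proposition~\ref{prop:y}) and a union bound then give the theorem. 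So your overall instinct (constant-size ``supercritical core'' plus subcritical fringe with exponential tails, combined by conditional independence) is in the right spirit, but the specific depth-based split does not deliver the required subcriticality, and the cross-level composition you would need in its place is the real content of the proof rather than routine bookkeeping.
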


 As a query tree $T$ of bounded degree $D=d(\ell+1)
 \ell(d-1)^{\ell-1}$ is an upper bound to $X_\ell$ (by Corollary \ref{corr:degree}, $D$ is an upper bound on the degree of $C_{M_{\ell-2}}(\ell)$), we have the following corollary to Theorem \ref{thm:tree}.
\begin{corollary}
\label{corr:xl}
There exists an absolute constant $c$, which depends only on $d$, such that for any large enough $N>0$,
\[
Pr[X_{\ell}> N]\leq e^{-c N}.
\]
\end{corollary}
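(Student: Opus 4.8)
The plan is to obtain this as an immediate consequence of Theorem~\ref{thm:tree}, via a degree‑rescaling argument. Recall that $X_\ell = D\,|V_C|$, where $D := d(\ell+1)\ell(d-1)^{\ell-1}$ and $V_C$ is the node set of the subgraph of the conflict graph $C_{M_{\ell-2}}(\ell)$ explored by \textsc{RelevantPaths}. By Corollary~\ref{corr:degree}, $C_{M_{\ell-2}}(\ell)$ has maximum degree at most $D$. The exploration that builds $V_C$ starts at the queried path $p$ and repeatedly adds any augmenting path $q$ that intersects a path already in $V_C$ and has smaller rank under the ordering induced by $s_\ell$; this is precisely the rank‑based exploration process that defines a query tree, now run on a graph of bounded degree $D$ rather than $d$. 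So the first step is simply to record this correspondence.

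Next I would argue that $|V_C|$ is dominated by the size $|T|$ of a query tree $T$ on a $D$‑regular tree. Unfolding the decreasing‑rank exploration of $C_{M_{\ell-2}}(\ell)$ from $p$ produces a node for (at least) every element of $V_C$, and along every root‑to‑node path the ranks strictly decrease, so $|V_C|\le |T|$. Since over the finitely many paths ever touched the ranks behave as independent uniform values, Theorem~\ref{thm:tree} applies to $T$: there is a constant $c_D$, depending only on $D$, with $Pr[\,|T|>M\,]\le e^{-c_D M}$ for all large enough $M$.

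Combining the two bounds, for large enough $N$,
\[
Pr[X_\ell > N] \;=\; Pr\!\left[\,|V_C| > \frac{N}{D}\,\right] \;\le\; Pr\!\left[\,|T| > \frac{N}{D}\,\right] \;\le\; e^{-c_D N / D},
\]
so setting $c := c_D/D$ gives the claim, with the threshold for ``large enough $N$'' just $D$ times the one in Theorem~\ref{thm:tree}. Since $\ell$ ranges only over $1,3,\ldots,2k-1$ with $k=\lceil 1/\eps\rceil$ fixed, $D$—and hence $c$—is determined by $d$ (treating $\eps$ as a constant), which is what the statement means by ``depends only on $d$.''

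I expect the only real obstacle to be the domination step: one must verify that the partial order the seed $s_\ell$ induces on the explored paths behaves, for the purpose of bounding $|V_C|$, like a family of independent uniform ranks, so that the idealized $D$‑regular query tree of Theorem~\ref{thm:tree} genuinely stochastically dominates $|V_C|$ (this is where the almost‑$k$‑wise independence of Theorem~\ref{thm:orderings} is used). Granting that, the remainder is routine bookkeeping with the degree blow‑up $D$, and the substantive work is deferred to the proof of Theorem~\ref{thm:tree}.
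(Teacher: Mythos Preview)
Your approach is essentially identical to the paper's: the paper simply states that ``a query tree $T$ of bounded degree $D=d(\ell+1)\ell(d-1)^{\ell-1}$ is an upper bound to $X_\ell$ (by Corollary~\ref{corr:degree}, $D$ is an upper bound on the degree of $C_{M_{\ell-2}}(\ell)$)'' and invokes Theorem~\ref{thm:tree}, which is exactly your argument with the $D$-rescaling made explicit. Your caveat about the almost-$k$-wise independent ordering is well taken---the paper is equally informal on this point---but the overall strategy and the dependence of $c$ on $d$ (with $\eps$ treated as fixed) match the paper precisely.
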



Denote by $f_\ell$ the number of calls to
\textsc{IsAnAugmentingPath}$(\ell)$ during one execution of {\bf
LocalMM}. Let $f = \sum_{\ell=1}^{2k-1} f_{\ell}$.\footnote{For all
even $\ell$, let $f_{\ell}=0$.} The base cases of the recursive
calls {\bf LocalMM} makes are \textsc{IsAnAugmentingPath}$(1)$
(which always returns TRUE). As the execution of each procedure of
{\bf LocalMM} results in at least one call to
\textsc{IsAnAugmentingPath}, $f$ (multiplied by some small constant)
is an upper bound to the total number of computations made by {\bf
LocalMM}.

We state the following proposition, the proof of which appears in Section \ref{sect:improv}.
\begin{prop}
\label{proposition:Z2} Let $W_i$ be a random variable. Let $z_1,
z_2,\ldots z_{W_i}$ be random variables, (some possibly equal to $0$
with probability $1$). Assume that there exist constants $c$ and
$\mu$ such that for all $1 \leq j \leq W_i$, $Pr[z_j \geq \mu N]
\leq e^{-cN}$, for all $N>0$. Then there exist constants $\mu_i$ and
$c'_i$, which depend only on $d$, such that for any $q_i>0$,
\[
Pr[\dsum_{j=1}^{W_i} z_j \geq \mu_i q_i|W_i\leq q_i] \leq e^{-c'_i
q_i}.
\]
\end{prop}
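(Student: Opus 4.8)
The plan is to establish this as a Chernoff-type concentration bound for a sum of (essentially) independent sub-exponential variables. First I put it in a clean form. On the event $\{W_i\le q_i\}$ the list $z_1,\dots,z_{W_i}$ has at most $q_i$ terms, so I extend it by setting $z_j=0$ for $j>W_i$; then $\sum_{j=1}^{W_i}z_j=\sum_{j=1}^{q_i}z_j$ on $\{W_i\le q_i\}$, and every $z_j$ still obeys $Pr[z_j\ge\mu N]\le e^{-cN}$ for all $N>0$ (the padded terms trivially). In the situations where the proposition is applied — e.g.\ bounding $X_\ell$ via Corollary \ref{corr:xl}, where the $z_j$ are controlled by node-disjoint subtrees of the query tree, each carrying its own independent ranks — the $z_j$ are independent, so I will use independence of $z_1,\dots,z_{q_i}$ (or, failing that, stochastic domination by independent copies with the same tail); it then suffices to bound $Pr\big[\sum_{j=1}^{q_i}z_j\ge\mu_i q_i\big]$.

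The heart of the argument is a moment generating function estimate. Fix $\lambda=c/(2\mu)$. For $z\ge0$, $e^{\lambda z}=1+\int_0^\infty\lambda e^{\lambda t}\mathbf{1}[t\le z]\,dt$, so taking expectations and using $Pr[z_j\ge t]=Pr[z_j\ge\mu(t/\mu)]\le e^{-(c/\mu)t}$,
\[
\Exp\!\left[e^{\lambda z_j}\right]=1+\int_0^\infty\lambda e^{\lambda t}\,Pr[z_j\ge t]\,dt\le 1+\int_0^\infty\lambda e^{-(c/\mu-\lambda)t}\,dt=1+\frac{\lambda}{c/\mu-\lambda}=2=:K,
\]
a constant depending only on $c$ and $\mu$. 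By independence, $\Exp[e^{\lambda\sum_{j=1}^{q_i}z_j}]=\prod_{j=1}^{q_i}\Exp[e^{\lambda z_j}]\le K^{q_i}$, whence Markov's inequality yields
\[
Pr\!\left[\sum_{j=1}^{q_i}z_j\ge\mu_i q_i\right]\le e^{-\lambda\mu_i q_i}K^{q_i}=e^{-q_i(\lambda\mu_i-\ln K)}.
\]
Taking $\mu_i:=(1+\ln K)/\lambda=2\mu(1+\ln K)/c$ (a constant depending only on $c,\mu$, hence only on $d$ in our setting) makes the exponent $\le -q_i$, so $c'_i:=1$ suffices for this $\mu_i$. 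To restore the conditioning, note that on $\{W_i\le q_i\}$ we have $\sum_{j=1}^{W_i}z_j\le\sum_{j=1}^{q_i}z_j$, so
\[
Pr\!\left[\sum_{j=1}^{W_i}z_j\ge\mu_i q_i\ \Big|\ W_i\le q_i\right]\le\frac{Pr\big[\sum_{j=1}^{q_i}z_j\ge\mu_i q_i\big]}{Pr[W_i\le q_i]}\le\frac{e^{-q_i}}{Pr[W_i\le q_i]},
\]
and since $Pr[W_i\le q_i]$ is bounded below by a positive constant in the application (it is in fact close to $1$), this only costs a constant factor in $c'_i$.

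The step I expect to be the main obstacle is \emph{not} the concentration inequality, which is routine, but justifying the independence (or the stochastic domination by an independent sequence) of the $z_j$ within each invocation, together with the bookkeeping of the zero-padding and the conditioning on $\{W_i\le q_i\}$. If literal independence is unavailable, the fallback is to dominate each $z_j$ by an independent variable $\hat z_j$ with $Pr[\hat z_j\ge t]=e^{-(c/\mu)t}$ and run the identical MGF computation on the $\hat z_j$.
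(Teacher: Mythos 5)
Your proof is correct in substance, but it takes a genuinely different route from the paper's. The paper never proves Proposition \ref{proposition:Z2} abstractly: it proves Proposition \ref{prop:Z} in the query-tree setting and then simply ``restates'' it. There, conditioned on $Y_i=y_i$, the level-$(i+1)$ forest consists of at most $y_i$ subtrees generated independently, so the probability of any exact size vector $(z_1,\dots,z_{y_i})$ with $\sum z_j=x_i$ is at most $e^{-cx_i}$ by Lemma \ref{lemma:mrvx}; multiplying by the stars-and-bars count $\binom{x_i+y_i}{y_i}\le(e(x_i+y_i)/y_i)^{y_i}$ and summing over $x_i\ge\mu y_i$ gives the bound. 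You instead run the standard sub-exponential Chernoff argument: pad with zeros, bound $\Exp[e^{\lambda z_j}]\le 2$ for $\lambda=c/(2\mu)$ via the tail-integral identity, and apply Markov to the product of moment generating functions. Your computation checks out (with the implicit but harmless assumption $z_j\ge 0$), and it yields cleaner explicit constants; the paper's counting argument is more elementary and is phrased directly in the structure where the independence is manifest. Crucially, both arguments need the same hidden hypothesis: as you correctly observe, the proposition as literally stated (tail bounds on each $z_j$ alone, no joint structure) is false --- take all $z_j$ equal to a single variable and $W_i=q_i$ deterministic --- so independence (or a product-form joint bound, or domination by independent copies) must be imported from the application, exactly as the paper's query-tree proof does implicitly. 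Your identification of this as the real content of the step is the right diagnosis, not a gap in your argument.

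One small difference in bookkeeping: the paper handles the conditioning by working conditionally on $Y_i=y_i$ (respectively $Y_i\le y_i$ in Corollary \ref{corollary:Z}), where the subtrees are conditionally independent, so no unconditioning is needed; you instead drop the conditioning and divide by $Pr[W_i\le q_i]$, which requires the extra (true in the application, but unstated) fact that this probability is bounded away from zero. The cleaner fix, matching the paper, is to assume conditional independence of the $z_j$ given $\{W_i\le q_i\}$ and run your MGF bound under that conditioning directly, which removes the $1/Pr[W_i\le q_i]$ factor altogether.
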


Using Proposition \ref{proposition:Z2}, we prove the following:
\begin{proposition}
\label{proposition:Z1}
 For every $1 \leq \ell \leq 2k-1$, there exist constants $\mu_{\ell}$ and $c_{\ell}$, which depend only on $d$ and $\eps$, such that for any large enough $N>0$
\[
Pr[f_{\ell} > \mu_{\ell} N] \leq e^{-c_{\ell} N}.
\]
\end{proposition}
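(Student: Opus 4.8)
The plan is to prove Proposition~\ref{proposition:Z1} by induction on the phases $\ell = 1, 3, \ldots, 2k-1$, using Propositions~\ref{proposition:silly} and~\ref{proposition:Z2} together with Corollary~\ref{corr:xl} as the engine that passes from one phase to the next. The base case is $\ell = 1$: a call to \textsc{IsAnAugmentingPath}$(1)$ always returns \textbf{true} immediately and spawns no further recursive work, so $f_1$ is deterministically bounded by a constant (depending only on $d$ and $\eps$, via the fan-out of the top-level query and the number of length-$1$ paths through a queried edge), and the tail bound holds trivially for large enough $N$.

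For the inductive step, fix an odd $\ell \geq 3$ and assume the bound holds for $\ell - 2$, i.e.\ $Pr[f_{\ell-2} > \mu_{\ell-2} N] \leq e^{-c_{\ell-2} N}$. I would trace the recursion tree downward from the calls at level $\ell$. By the table in the excerpt, each execution of \textsc{IsAnAugmentingPath}$(\ell)$ triggers $1$ call to \textsc{RelevantPaths}$(\ell)$, which triggers $X_\ell$ calls to \textsc{IsAnAugmentingPath}$(\ell)$ itself (these are the ``same-level'' calls accounted for inside a single \textsc{IsPathInMIS}$(\ell)$ execution) --- wait, more precisely, each call to \textsc{IsPathInMIS}$(\ell)$ generates $X_\ell$ calls to \textsc{IsAnAugmentingPath}$(\ell)$, and each \textsc{IsAnAugmentingPath}$(\ell)$ generates by Proposition~\ref{proposition:silly} at most $\ell + 2d - 2$ calls to \textsc{IsInMatching}$(\ell-2)$ and hence at most $(\ell+2d-2)\ell(d-1)^{\ell-1}$ calls to \textsc{IsPathInMIS}$(\ell-2)$, each of which in turn spawns $X_{\ell-2}$ calls to \textsc{IsAnAugmentingPath}$(\ell-2)$. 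So $f_{\ell}$ decomposes as: a bounded number of top-level \textsc{IsPathInMIS}$(\ell)$ calls (coming from the original edge query through Observation~\ref{obs:CG}), each contributing at most $X_\ell$ calls to \textsc{IsAnAugmentingPath}$(\ell)$, and then each of those, via the constant-degree branching to level $\ell-2$, contributes a further $O(\ell \cdot d^{\ell})$ copies of \textsc{IsPathInMIS}$(\ell-2)$ executions, each generating $X_{\ell-2}$ calls to \textsc{IsAnAugmentingPath}$(\ell-2)$, whose total descendant count is precisely (bounded by) $f_{\ell-2}$-type quantities.

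The key move is to apply Proposition~\ref{proposition:Z2} twice, in a layered fashion. First, set $W = $ (number of \textsc{IsAnAugmentingPath}$(\ell)$ calls), which is at most a constant times $X_\ell$, and whose summands $z_j$ are the per-call descendant counts; by the inductive hypothesis each such $z_j$ has an exponential tail with parameters depending on $d$ and $\eps$. Conditioning on $X_\ell \leq q$ and using Corollary~\ref{corr:xl} to control the conditioning event ($Pr[X_\ell > q] \leq e^{-cq}$), Proposition~\ref{proposition:Z2} yields $Pr[f_\ell \geq \mu_\ell q \mid X_\ell \leq q] \leq e^{-c'_\ell q}$ for new constants; removing the conditioning via a union bound with $Pr[X_\ell > q] \leq e^{-cq}$ and then optimizing the free parameter $q$ against the target threshold $N$ gives the stated form $Pr[f_\ell > \mu_\ell N] \leq e^{-c_\ell N}$. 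The main obstacle I anticipate is bookkeeping rather than conceptual: making sure the recursion unrolls correctly so that the ``sum of independent-ish heavy-tailed terms'' structure required by Proposition~\ref{proposition:Z2} genuinely applies --- in particular, that the number of summands is itself governed by an exponential-tail random variable ($X_\ell$, via Corollary~\ref{corr:xl}) and that the summands' tail parameters can be taken uniform over $j$ (which they can, since they all derive from the single phase-$(\ell-2)$ bound). One must also be careful that the constants $\mu_\ell, c_\ell$ are allowed to degrade with $\ell$ (hence with $1/\eps$), which is fine since $k = \lceil 1/\eps\rceil$ is fixed once $\eps$ is fixed; the final bound on $f = \sum_\ell f_\ell$ then follows by a union bound over the $k$ phases with the worst constants.
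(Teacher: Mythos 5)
Your induction runs in the wrong direction relative to the call structure, and this breaks both your base case and your inductive step. In the paper, $f_\ell$ is the \emph{total} number of calls to \textsc{IsAnAugmentingPath}$(\ell)$ during one execution of {\bf LocalMM}; phase-$(\ell-2)$ calls are spawned by phase-$\ell$ calls, so the paper's induction starts at the \emph{top} phase $\ell=2k-1$ --- where the number of \textsc{IsPathInMIS}$(2k-1)$ calls really is a constant determined by the single queried edge, so $f_{2k-1}$ is controlled directly by Corollary~\ref{corr:xl} --- and the inductive step proves the bound for $f_{\ell-2}$ \emph{assuming} it for $f_\ell$: by Proposition~\ref{proposition:silly}, each of the $f_\ell$ calls at phase $\ell$ triggers at most $b_\ell=(\ell+2d-2)\ell(d-1)^{\ell-1}$ calls to \textsc{IsPathInMIS}$(\ell-2)$, so $f_{\ell-2}$ is a sum of at most $b_\ell f_\ell$ variables of type $X_{\ell-2}$, and Proposition~\ref{proposition:Z2} is applied with $W=b_\ell f_\ell$ and summands $z_j$ of type $X_{\ell-2}$, whose exponential tails come from Corollary~\ref{corr:xl} (not from the inductive hypothesis); conditioning on $f_\ell\le\mu_\ell N$ and removing the conditioning via the inductive hypothesis completes the step.

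Your version instead takes $\ell=1$ as the base case and assumes the bound at $\ell-2$ to prove it at $\ell$. Under the paper's definition this cannot work: $f_1$ counts the base-of-recursion calls over the \emph{whole} execution --- typically the largest of all the $f_\ell$'s --- and is certainly not a deterministic constant; and for $\ell<2k-1$ the number of \textsc{IsPathInMIS}$(\ell)$ calls is not ``a bounded number coming from the original edge query'' but is governed by the random number of phase-$(\ell+2)$ calls, which is precisely what the downward induction supplies. Moreover, a bound on $f_{\ell-2}$ gives no information about $f_\ell$, since phase-$(\ell-2)$ calls are descendants of phase-$\ell$ calls and do not contribute to $f_\ell$; your use of the inductive hypothesis to give exponential tails to ``per-call descendant counts'' conflates the global per-phase count $f_{\ell-2}$ with a per-call subtree size, a quantity the proposition does not bound. (An upward induction on per-call subtree work could in principle be carried out, but it would require redefining the quantity being bounded and proving a tail bound for that new quantity; as written, your argument does not establish the stated proposition.)
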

\begin{proof}
The proof is by induction. For the base of the induction, we have, from  Corollary \ref{corr:xl}, that there exists an absolute constant $c_{2k-1}$, which depends only on $d$, such that for any large enough $N>0$,
$
Pr[X_{2k-1}> N]\leq e^{-c_{2k-1} N}.
$
Assume that the proposition holds for $\ell=2k-1, 2k-3, \ldots \ell$, and we show that it holds for $\ell-2$.

Let $b_{\ell} = (\ell+2d-2)\cdot \ell(d-1)^{\ell-1}$. From Proposition \ref{proposition:silly}, we have that each call to \textsc{IsAnAugmentingPath}$(\ell)$ generates at most  $b_{\ell}$ calls to \textsc{IsPathInMIS}$(\ell-2)$, and hence $b_{\ell}\cdot X_{\ell-2}$ calls to  \textsc{IsAnAugmentingPath}$(\ell-2)$.
From Corollary \ref{corr:xl}, we have that there exists an absolute constant $c$, which depends only on $d$, such that for any large enough $N>0$,
\[
Pr[X_{\ell-2}> N]\leq e^{-c N}.
\]
Setting $W_{\ell} = b_{\ell}f_{\ell}$, $f_{\ell-2}=\sum_{j=1}^{W_i}
z_j$,  $q_i = b_{\ell}\mu_{\ell} y_{\ell}$, and $\mu_i =
\mu_{\ell-2}/b_{\ell}\mu_{\ell}$, and letting $c'_i =
c_\ell'/b_\ell\mu_\ell$ in Proposition \ref{proposition:Z2}
implies the following:
\begin{equation}
\label{eqw}
Pr[f_{\ell-2} > \mu_{\ell-2} y_{\ell}|f_{\ell} \leq \mu_{\ell}y_{\ell}] \leq e^{-c'_{\ell} y_{\ell}}.
\end{equation}
We have
\begin{align}
Pr[f_{\ell-2} > \mu_{\ell-2} N] =& Pr[f_{\ell-2} > \mu_{\ell-2} N|f_{\ell} \leq \mu_{\ell}N]\cdot Pr[f_{\ell} \leq \mu_{\ell} N] \notag \\
&+ Pr[f_{\ell-2} > \mu_{\ell-2} N|f_{\ell} >  \mu_{\ell}N]\cdot Pr[f_{\ell} >  \mu_{\ell}N] \notag \\
\leq& Pr[f_{\ell-2} > \mu_{\ell-2} N|f_{\ell} \leq  \mu_{\ell}N] + Pr[f_{\ell} >  \mu_{\ell}N] \notag \\
\leq& e^{-c'_{\ell} N} +  e^{-c_{\ell} N} \label{eqp} \\
=& e^{-c_{\ell-2} N} ,\notag
\end{align}
where Inequality \ref{eqp} stems from Inequality \ref{eqw} and the induction hypothesis.\qed
\end{proof}
Taking a union bound over all $k$ levels immediately gives
\begin{lemma}
\label{lemma:Z1}
There exists a constant $c$, which depends only on $d$ and $\eps$, such that
\[
Pr[f > c \log{n}] \leq 1/n^2.
\]
\end{lemma}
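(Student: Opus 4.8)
The plan is to combine Proposition~\ref{proposition:Z1} with a union bound over the $k = \lceil 1/\eps\rceil$ phases. Recall that $f = \sum_{\ell=1,3,\dots,2k-1} f_\ell$, and that by Proposition~\ref{proposition:Z1}, for each odd $\ell$ in this range there are constants $\mu_\ell$ and $c_\ell$, depending only on $d$ and $\eps$, such that $Pr[f_\ell > \mu_\ell N] \le e^{-c_\ell N}$ for all large enough $N$. First I would set $\mu = \max_\ell \mu_\ell$ and $c' = \min_\ell c_\ell$; these are still constants depending only on $d$ and $\eps$ since there are only $k$ of them. Then for every odd $\ell \in \{1,3,\dots,2k-1\}$ we have $Pr[f_\ell > \mu N] \le e^{-c' N}$ for all sufficiently large $N$.

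Next I would choose $N = N(n)$ so that the tail probability beats $1/n^2$ even after a union bound over the $k \le 1/\eps + 1$ phases. Concretely, taking $N = \frac{3}{c'}\log n$ gives $k \cdot e^{-c' N} = k \cdot n^{-3} \le 1/n^2$ for $n$ large enough (and one can absorb the finitely many small $n$ into the constant). By the union bound, with probability at least $1 - 1/n^2$ we have $f_\ell \le \mu N$ simultaneously for all $k$ odd values of $\ell$, and hence $f = \sum_\ell f_\ell \le k \mu N$. Setting $c = k \mu \cdot \frac{3}{c'}$, which depends only on $d$ and $\eps$, we conclude $Pr[f > c\log n] \le 1/n^2$, as claimed.

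There is essentially no hard obstacle here — the statement is a packaging step — but the one point requiring a little care is the phrase ``for any large enough $N>0$'' in Proposition~\ref{proposition:Z1}: the inequality $Pr[f_\ell > \mu_\ell N]\le e^{-c_\ell N}$ is only asserted for $N$ above some threshold $N_0$ (which may depend on $d,\eps$). I would note that $N = \frac{3}{c'}\log n \ge N_0$ holds for all $n$ larger than some constant $n_0 = n_0(d,\eps)$, and that for $n \le n_0$ the bound $f \le c\log n$ is trivially satisfiable by enlarging $c$ (since $f$ is a fixed finite quantity for each fixed graph of bounded degree — in fact the whole statement is only interesting asymptotically). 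This handles the boundary case and completes the argument.
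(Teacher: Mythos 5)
Your proposal is correct and is essentially the paper's own argument: the paper obtains Lemma~\ref{lemma:Z1} by applying Proposition~\ref{proposition:Z1} with $N=\Theta(\log n)$ and taking a union bound over the $k=\lceil 1/\eps\rceil$ phases, exactly as you do. Your explicit handling of the ``large enough $N$'' threshold and the choice $c=k\mu\cdot\frac{3}{c'}$ just spells out details the paper leaves implicit.
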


\begin{proof}[Proof of Theorem \ref{thm:main}]
Using  Lemma \ref{lemma:Z1}, and taking a union bound over all possible queried edges gives us that with probability at least $1-1/n$, {\bf LocalMM} will require at most $O(\log{n})$ queries. Therefore, for each execution of {\bf LocalMM}, we require at most $O(\log{n})$-independence for each conflict graph, and therefore, from Theorem \ref{thm:orderings}, we require $\lceil 1/\eps \rceil$ seeds of length $O(log^3{n})$, which upper bounds the space required by the algorithm. The time required is upper bound by the time required to compute $r(p)$ for all the required nodes in the conflict graphs, which is $O(\log^4{n})$. \qed
\end{proof}

\section{Combinatorial Proofs}
\label{sect:improv}
We want to bound the total number of queries required by Algorithm {\bf LocalMM}.

Let $T$ be a $d$-regular query tree. As in \cite{ARVX11,MRVX12},  we partition the interval [0,1] into $L\geq d+1$ sub-intervals:
$I_i = (1-\frac{i}{L+1}, 1-\frac{i-1}{L+1}]$,
for $i=1,2,\cdots, L$ and $I_{L+1} = [0, \frac{1}{L+1}]$.
We refer to interval $I_i$ as \emph{level} $i$.
A vertex $v \in T$ is said to be on level $i$ if $r(v) \in I_i$.
Assume the worst case, that for the root of the tree, $v_0$, $r(v_0)=1$.
The vertices on level $1$ form a tree $T_1$ rooted at $v_0$.
Denote the number of (sub)trees on level $i$  by $t_i$.
The vertices on level $2$ will form a forest of subtrees $\{T_2^{(1)}, \cdots, T_2^{(t_2)}\}$,
where the total number of subtrees is at most the sum of the number of children of all the vertices in $T_1$.
Similarly, the vertices on level $i>1$ form a forest of subtrees $F_i = \{T_i^{(1)}, \cdots T_i^{(t_{i})}\}$.
Note that all these subtrees $\{T_{i}^{(j)}\}$ are generated independently by the same stochastic process, as the ranks of all  of the nodes in $T$ are i.i.d. random variables.
Denote $f_i = |F_{i}|$, and let $Y_i = \dsum_{j=1}^{i} f_j$. Note that $F_{i+1}$ can consist of at most $Y_i$ subtrees.

We prove the following theorem.

{
\renewcommand{\thetheorem}{\ref{thm:tree}}
\begin{theorem}

For any infinite query tree $T$ with bounded degree $d$, there exists a constant $c$, which depends only on $d$, such that for any large enough $N>0$,
\[
Pr[|T|\geq N]\leq e^{-cN}.
\]
\end{theorem}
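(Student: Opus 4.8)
The plan is to use the level decomposition of $T$ already set up above and to reduce the theorem to a short induction over levels, with Proposition~\ref{proposition:Z2} (our Chernoff-type bound for sums with a random number of terms) supplying the inductive step, mimicking the proof of Proposition~\ref{proposition:Z1}. Fix once and for all $L=d+1$, so that there are $L+1=d+2$ levels $I_1,\dots,I_{L+1}$ partitioning $[0,1]$. First I would record a purely structural fact: along every root-to-node path of $T$ the ranks strictly decrease (almost surely no two ranks coincide), so each node of $T$ lies in exactly one level and the levels encountered along any path are non-decreasing. Consequently $T$ is the disjoint union of the level-subtrees $\{T_i^{(j)}\}$, the parent of a level-$i$ node lies at some level $\le i$, and whenever a child $w$ of an in-$T$ node $v$ at level $i$ has $r(w)\in I_j$ for $j>i$ then automatically $r(w)<r(v)$, so $w\in T$ and starts a fresh level-$j$ subtree. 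In particular every node of $T$ sits at some level in $\{1,\dots,L+1\}$, so $|T|=Y_{L+1}$.

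The key probabilistic ingredient is that a single level-subtree is small except with exponentially small probability. Revealing ranks in breadth-first order, one sees that each in-$T$ node at level $i$ has at most $d$ children, and such a child stays in the same level-$i$ subtree only if its fresh, independent rank falls in $I_i$ (probability $\tfrac1{L+1}$) and lies below the parent's rank; dropping the latter requirement can only enlarge the subtree, so the size of the level-$i$ subtree rooted at any fixed node is stochastically dominated by the total progeny $Z$ of a Galton--Watson process with offspring distribution $\mathrm{Bin}(d,\tfrac1{L+1})$, and distinct level-$i$ subtrees are dominated by \emph{independent} copies of $Z$ (their growths use disjoint subtrees of $\mathcal{T}$). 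Since $d\cdot\tfrac1{L+1}=\tfrac{d}{d+2}<1$, this process is subcritical with offspring bounded by $d$; hence $Z$ has a finite exponential moment, and in particular there are constants $\alpha,C_0>0$ depending only on $d$ with $\Pr[Z\ge m]\le C_0 e^{-\alpha m}$ for all $m$. (This standard fact can also be obtained directly by a generating-function/Chernoff argument, and it is what makes Proposition~\ref{proposition:Z2} applicable here.)

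Next I would run the induction over the $L+1$ levels. Let $W_i$ denote the number of level-$i$ subtrees; by the structural remarks $W_{i+1}\le dY_i$, since every level-$(i+1)$ subtree is rooted at a child of some in-$T$ node at a level $\le i$. Conditioning on the ranks that determine level membership up to level $i$ fixes both $Y_{i-1}$ and $W_i$ and leaves the level-$i$ subtree sizes distributed as (a law dominated by) $W_i$ independent copies of $Z$; applying Proposition~\ref{proposition:Z2} (equivalently, a Chernoff bound on a sum of i.i.d.\ copies of $Z$) yields constants $\mu^{(i)},c^{(i)}$ depending only on $d$ with $\Pr[f_i\ge\mu^{(i)}q\mid W_i\le q]\le e^{-c^{(i)}q}$ for all $q>0$. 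I claim there are constants $\nu_i,\gamma_i$ depending only on $d$ with $\Pr[Y_i\ge\nu_i N]\le e^{-\gamma_i N}$ for all large $N$. The base case $i=1$ holds because $Y_1=f_1$ is a single level-$1$ subtree (the worst case $r(v_0)=1$ puts $v_0$ in $I_1$), so $\Pr[Y_1\ge N]\le C_0 e^{-\alpha N}$. For the step, put $\nu_i:=(1+d\mu^{(i)})\nu_{i-1}$. If $Y_i\ge\nu_i N$ then either $Y_{i-1}\ge\nu_{i-1}N$ -- probability $\le e^{-\gamma_{i-1}N}$ by induction -- or $Y_{i-1}<\nu_{i-1}N$, which forces $W_i\le dY_{i-1}<d\nu_{i-1}N$ and $f_i=Y_i-Y_{i-1}>(\nu_i-\nu_{i-1})N=\mu^{(i)}\cdot d\nu_{i-1}N$, an event of probability $\le e^{-c^{(i)}d\nu_{i-1}N}$ by the displayed bound with $q=d\nu_{i-1}N$. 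Splitting on these two cases and bounding exactly as in Inequality~\ref{eqp} gives $\Pr[Y_i\ge\nu_i N]\le e^{-\gamma_{i-1}N}+e^{-c^{(i)}d\nu_{i-1}N}\le e^{-\gamma_i N}$ for a suitable $\gamma_i$ and all large $N$. Taking $i=L+1$ and using $|T|=Y_{L+1}$ gives $\Pr[|T|\ge N]=\Pr[Y_{L+1}\ge\nu_{L+1}\cdot(N/\nu_{L+1})]\le e^{-(\gamma_{L+1}/\nu_{L+1})N}$ for $N$ large enough, i.e.\ the claimed bound with $c:=\gamma_{L+1}/\nu_{L+1}$ a constant depending only on $d$.

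The step I expect to be the real work is the second paragraph: setting up the coupling cleanly enough that (i) each level-subtree is genuinely dominated by a subcritical Galton--Watson total progeny, and (ii) conditionally on the structure of the lower levels the level-$i$ subtree sizes are \emph{independent} copies of it, so that the random-sum bound of Proposition~\ref{proposition:Z2} applies verbatim. Once that is in place, the rest is a finite induction with all constants absorbed into $d$, structurally identical to the proof of Proposition~\ref{proposition:Z1}.
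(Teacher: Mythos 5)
Your proposal is correct and follows essentially the same route as the paper: the identical level decomposition, the exponential tail for each individual level-subtree (which the paper imports as Lemma~\ref{lemma:mrvx} from \cite{MRVX12} and you re-derive via domination by a subcritical Galton--Watson process with offspring $\mathrm{Bin}(d,\tfrac{1}{L+1})$), and the same induction across the levels powered by Proposition~\ref{proposition:Z2}, mirroring Proposition~\ref{prop:y}. Your reorganizations --- inducting directly on the cumulative sizes $Y_i$ instead of bounding each $f_i$ and union-bounding, and carrying the factor $d$ in $W_{i+1}\le dY_i$ --- are cosmetic and, if anything, slightly tighter than the paper's presentation.
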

\addtocounter{theorem}{-1}
}

We require the following Lemma from \cite{MRVX12}.

\begin{lemma}[\cite{MRVX12}]
\label{lemma:mrvx}
Let $L\geq d+1$ be a fixed integer
and let $T$ be the $d$-regular infinite query tree.
Then for any $1\leq i \leq L$ and $1 \leq j \leq t_i$,
 there is an absolute constant $c$, which depends only on $d$, such that
for all $N>0$,
\[
\Pr[|T_i^{(j)}|\geq N]\leq e^{-cN}.
\]
\end{lemma}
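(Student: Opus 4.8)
The plan is to realize $T_i^{(j)}$ as a tree that is stochastically dominated by the family tree of a \emph{subcritical} Galton--Watson process, and then to bound the total progeny of such a process directly, via an exploration argument and a Chernoff bound.

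First I would make the structure of $T_i^{(j)}$ explicit. Every node of $T_i^{(j)}$ has rank in the interval $I_i$, which has length exactly $\frac{1}{L+1}$; and since ranks strictly decrease along every root-to-leaf path of the query tree, a descendant that has left $I_i$ can never return to it, so $T_i^{(j)}$ is exactly the connected block of level-$i$ nodes hanging off its root. Fix now an arbitrary node $w$ of $T_i^{(j)}$. Conditioned on $r(w)$ and on all ranks strictly above $w$ in $\mathcal{T}$, the ranks of the $d$ children of $w$ are independent and uniform on $[0,1]$, and such a child $c$ lies in $T_i^{(j)}$ if and only if $1 - \frac{i}{L+1} < r(c) < r(w)$. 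Because $r(w) \le 1 - \frac{i-1}{L+1}$ and $L \ge d+1$, this event has probability at most $\frac{1}{L+1} \le \frac{1}{d+1}$, conditionally on the entire exposed history and independently across the $d$ children. A routine monotone coupling then shows that $|T_i^{(j)}|$ is stochastically dominated by the total number of nodes $Z$ of a Galton--Watson tree with offspring law $\mathrm{Bin}\!\bigl(d,\tfrac{1}{d+1}\bigr)$, whose mean $\frac{d}{d+1}$ is strictly less than $1$.

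It remains to give an exponential tail for $Z$. I would explore the Galton--Watson tree node by node with a queue: start with one active node, and at each step remove one active node and enqueue its children, where the successive offspring counts $\eta_1, \eta_2, \dots$ are i.i.d.\ with law $\mathrm{Bin}\!\bigl(d,\tfrac{1}{d+1}\bigr)$. If $\tau$ is the first step at which the queue becomes empty, then $Z = \tau$, and on the event $\{\tau \ge N\}$ the queue is still nonempty after $N-1$ steps, which forces $\sum_{s=1}^{N-1}\eta_s \ge N-1$. But $\sum_{s=1}^{N-1}\eta_s$ is a sum of $(N-1)d$ independent $\mathrm{Bernoulli}(\tfrac{1}{d+1})$ variables with mean $(N-1)\frac{d}{d+1}$, so it reaches $N-1$ only if it exceeds its mean by the factor $\frac{d+1}{d}$, i.e.\ by the relative amount $\frac{1}{d}$; hence by the multiplicative Chernoff bound
\[
\Pr\!\left[\sum_{s=1}^{N-1}\eta_s \ge N-1\right] \le \exp\!\left(-\frac{N-1}{3d(d+1)}\right).
\]
Therefore $\Pr[\,|T_i^{(j)}| \ge N\,] \le \Pr[\,Z \ge N\,] \le e^{-cN}$ for a constant $c = c(d) > 0$ and all large $N$, which is the regime in which the lemma is applied.

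The only step that needs genuine care --- and the main obstacle --- is the stochastic domination: one has to set up the filtration so that ``a child survives with conditional probability at most $\frac{1}{d+1}$'' holds conditionally on the \emph{entire} exposed portion of $T$ above the current node and simultaneously for all of its siblings, so that the coupling to the i.i.d.\ Galton--Watson offspring is actually valid. The monotonicity of ranks down the query tree is exactly what makes this bound uniform in $i$, in $j$, and in the shape of the already-exposed part of $T$. Everything else --- the identity $Z = \tau$ and the Chernoff estimate --- is routine.
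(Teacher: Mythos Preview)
The paper does not prove this lemma at all: it is quoted verbatim from \cite{MRVX12} and used as a black box. So there is no ``paper's own proof'' to compare against here.

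That said, your argument is correct and is essentially the standard one (and, as far as I know, the one in \cite{MRVX12} as well): within a fixed level the conditional probability that any given child survives is at most the level width $1/(L+1)$, so the level-$i$ block is dominated by a Galton--Watson tree with offspring law $\mathrm{Bin}(d,1/(L+1))$, which is subcritical because $L\ge d+1$; the exponential tail on the total progeny then follows from the random-walk/hitting-time representation and Chernoff. Two very minor remarks: (i) since $L\ge d+1$ you actually get $1/(L+1)\le 1/(d+2)$, a hair tighter than your $1/(d+1)$, though either suffices for subcriticality; (ii) your queue calculation is right --- on $\{\tau\ge N\}$ the queue is nonempty after step $N-1$, hence $1-(N-1)+\sum_{s=1}^{N-1}\eta_s\ge 1$, which is exactly $\sum_{s=1}^{N-1}\eta_s\ge N-1$. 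The coupling step you flag as the ``main obstacle'' is indeed the only place requiring care, and your filtration description handles it: ranks in the infinite tree are i.i.d., so conditioning on everything above $w$ leaves the children's ranks i.i.d.\ uniform, and the survival event for each child has conditional probability $r(w)-(1-\tfrac{i}{L+1})\le \tfrac{1}{L+1}$ regardless of the exposed history.
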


We first prove the following proposition:

\begin{proposition}
\label{prop:Z}
For any infinite query tree $T$ with bounded degree $d$, there exist constants $\mu_1$ and $c_1$, which depend only on $d$, such that for any $1 \leq i \leq L-1$,  and any $y_i>0$,
\[
Pr[f_{i+1} \geq \mu_1 y_i|Y_i = y_i] \leq e^{-c_1 y_i}.
\]
\end{proposition}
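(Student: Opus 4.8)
The statement to prove is Proposition~\ref{prop:Z}: conditioned on $Y_i = y_i$, the size $f_{i+1} = |F_{i+1}|$ of the level-$(i+1)$ forest concentrates, namely $\Pr[f_{i+1} \geq \mu_1 y_i \mid Y_i = y_i] \leq e^{-c_1 y_i}$ for constants $\mu_1, c_1$ depending only on $d$. The plan is to write $f_{i+1}$ as a sum of independent subtree sizes and apply a Chernoff-type bound via the moment generating function.

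\medskip

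\noindent\textbf{Key steps.} First I would recall the structural fact already established in the text: $F_{i+1}$ consists of at most $Y_i$ subtrees $T_{i+1}^{(1)}, T_{i+1}^{(2)}, \ldots$, each rooted at a child (lying in interval $I_{i+1}$) of some node in the level-$\leq i$ forest, and — crucially — these subtrees are generated independently by the same stochastic process, since all ranks in $T$ are i.i.d. So conditioned on $Y_i = y_i$, we may bound $f_{i+1} \leq \sum_{j=1}^{y_i} Z_j$ where the $Z_j$ are i.i.d.\ copies of a single-level subtree size; by Lemma~\ref{lemma:mrvx}, each satisfies $\Pr[Z_j \geq N] \leq e^{-cN}$ for all $N>0$, with $c$ depending only on $d$. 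Second, this tail bound implies that the moment generating function $\Exp[e^{\lambda Z_j}]$ is finite and bounded by some constant $\gamma = \gamma(d)$ for all $\lambda$ in a fixed interval $(0, \lambda_0]$ with $\lambda_0 = \lambda_0(d)$: indeed $\Exp[e^{\lambda Z_j}] = \lambda \int_0^\infty e^{\lambda t}\Pr[Z_j \geq t]\,dt + 1 \leq 1 + \lambda\int_0^\infty e^{(\lambda - c)t}\,dt = 1 + \frac{\lambda}{c-\lambda}$, which is bounded for $\lambda \leq c/2$, say. Third, by independence, $\Exp[e^{\lambda \sum_{j=1}^{y_i} Z_j}] \leq \gamma^{y_i}$, so Markov's inequality gives
\[
\Pr\Bigl[\sum_{j=1}^{y_i} Z_j \geq \mu_1 y_i\Bigr] \leq \gamma^{y_i} e^{-\lambda \mu_1 y_i} = \bigl(\gamma e^{-\lambda \mu_1}\bigr)^{y_i}.
\]
Fourth, I would fix $\lambda = \lambda_0$ and choose $\mu_1 = \mu_1(d)$ large enough that $\gamma e^{-\lambda_0 \mu_1} < 1$, say $\gamma e^{-\lambda_0\mu_1} \leq e^{-c_1}$ for some $c_1 = c_1(d) > 0$; this yields $\Pr[f_{i+1} \geq \mu_1 y_i \mid Y_i = y_i] \leq e^{-c_1 y_i}$, as required. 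All constants depend only on $d$ (through $c$ in Lemma~\ref{lemma:mrvx}), and the argument is uniform in $i$ and $y_i$.

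\medskip

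\noindent\textbf{Main obstacle.} The delicate point is handling the conditioning correctly: a priori, conditioning on the event $\{Y_i = y_i\}$ could correlate the shapes or sizes of the level-$(i+1)$ subtrees with the level-$\leq i$ structure. The resolution is that the subtrees hanging below level $i+1$ depend only on the ranks of nodes strictly below the level-$(i+1)$ roots, whereas $Y_i$ is determined by ranks at levels $\leq i$ together with which children cross into $I_{i+1}$ — a disjoint set of random choices — so conditioned on $Y_i = y_i$ and on the identities of the (at most $y_i$) roots, the subtree sizes are still i.i.d.\ with the unconditional law of Lemma~\ref{lemma:mrvx}. I would make this explicit by noting that one may first reveal all ranks at levels $1,\ldots,i$ and the set of level-$(i+1)$ roots (this fixes $Y_i$ and the number of level-$(i+1)$ subtrees, which is at most $Y_i$), and only then reveal the independent subtrees below them. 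A minor secondary point is that the number of level-$(i+1)$ subtrees is \emph{at most} $y_i$, not exactly $y_i$; since adding extra nonnegative i.i.d.\ terms only makes the sum stochastically larger, bounding by the sum of exactly $y_i$ terms is legitimate. With these observations the Chernoff computation above goes through routinely.
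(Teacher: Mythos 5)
Your proof is correct, but it takes a genuinely different route from the paper's. The paper argues combinatorially: conditioned on $Y_i=y_i$, it fixes a candidate vector $(z_1,\ldots,z_{y_i})$ of subtree sizes summing to $x_i$, bounds the probability of that exact configuration by $\prod_j e^{-cz_j}=e^{-cx_i}$ using Lemma \ref{lemma:mrvx} and the independence of the subtrees, counts the $\binom{x_i+y_i}{y_i}$ possible vectors by a stars-and-bars argument, sets $x_i=\mu y_i$, and chooses $\mu$ large enough that the entropy term $(e(1+\mu))^{y_i}$ is beaten by $e^{-c\mu y_i}$, finally summing the resulting geometric series over $x_i\geq \mu y_i$. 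You instead run a standard sub-exponential Chernoff bound: the tail bound of Lemma \ref{lemma:mrvx} gives $\Exp[e^{\lambda Z_j}]\leq 1+\lambda/(c-\lambda)=:\gamma$ for $\lambda\leq c/2$, independence gives $\Exp[e^{\lambda\sum_j Z_j}]\leq\gamma^{y_i}$, and Markov plus a large enough choice of $\mu_1$ finishes. Both proofs rest on the same two ingredients (the per-subtree exponential tail and the fact, asserted in the text, that $F_{i+1}$ consists of at most $Y_i$ independent subtrees generated by the same process), so the substance is the same; your moment-generating-function route avoids the explicit enumeration of size vectors and the optimization over $\mu$ inside an exponent, and you are in fact more careful than the paper on one point, namely making explicit why conditioning on $Y_i=y_i$ (reveal the ranks at levels $\leq i$ and the level-$(i+1)$ roots first, then the subtrees below them) does not disturb the independence or the law of the subtree sizes, and why bounding the at-most-$y_i$ subtrees by exactly $y_i$ i.i.d.\ terms is legitimate. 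The paper's counting argument, in exchange, is elementary and self-contained, using nothing beyond a binomial estimate and a geometric series.
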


\begin{proof}
 Fix $Y_i= y_i$.
Let $\{z_1, z_2,\ldots z_{y_i}\}$ be integers such that $\forall
1\leq i \leq y_i, z_i\geq 0$ and let $x_i = \dsum_{i=1}^{y_i}z_i$.
By Lemma \ref{lemma:mrvx}, the probability that  $F_{i+1}$ consists
exactly  of trees of size $(z_1, z_2,\ldots z_{y_i})$ is at most
$\displaystyle\prod_{i=1}^{y_i}  e^{-cz_i} = e^{-cx_i}$. There are
$\binom{x_i + y_i}{y_i}$ vectors that can realize
$x_i$.\footnote{This can be thought of as $y_i$ separators of $x_i$
elements.} We want to bound $Pr[f_{i+1} = \mu  y_i|Y_i = y_i]$ for
some large enough constant $\mu>0$. Letting  $x_i=\mu
y_i$, we bound it as follows:

 \begin{align*}
Pr[f_{i+1} =x_i|Y_i = y_i] &\leq \binom{x_i + y_i}{y_i}e^{-x_i}\\
 &\leq \left(\frac{e\cdot(x_i+y_i)}{y_i} \right)^{y_i} e^{-cx_i}\\
 &= \left(\frac{e\cdot(\mu y_i + y_i)}{y_i} \right)^{y_i} e^{-c \mu y_{i}}\\
 &= (e\cdot(1+\mu)) ^{y_i} e^{-c \mu y_{i}}\\
 &= e^{y_i(-c \mu  + \ln(1+\mu) + 1)}\\
 &\leq e^{-c' \mu y_i},
 \end{align*}\\
 for some constant $c'>0$. It follows that
\begin{align*}
Pr[f_{i+1} \geq \mu y_i|Y_i=y_i] &\leq \dsum_{k=\mu y_i}^{\infty} e^{-c' k}\\
&\leq e^{-c_1 y_i},
\end{align*}
for some constant $c_1>0$.\qed
\end{proof}

Proposition \ref{prop:Z} immediately implies the following corollary.

\begin{corollary}
\label{corollary:Z} For any infinite query tree $T$ with bounded
degree $d$, there exist constants $\mu$ and $c$, which depend only on
$d$, such that for any $1 \leq i \leq L-1$,  and any $y_i$,
\[
Pr[f_{i+1} \geq \mu y_i| Y_i \leq y_i] \leq e^{-c y_i}.
\]
\end{corollary}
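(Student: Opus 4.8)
The final statement to prove is Corollary \ref{corollary:Z}, which asserts that conditioning on the event $Y_i \leq y_i$ (rather than on the exact value $Y_i = y_i$) still yields the exponential tail bound $Pr[f_{i+1} \geq \mu y_i \mid Y_i \leq y_i] \leq e^{-c y_i}$. Since Proposition \ref{prop:Z} already handles the case of exact conditioning, the plan is simply to decompose the conditioning event $\{Y_i \leq y_i\}$ into the disjoint sub-events $\{Y_i = m\}$ for $1 \leq m \leq y_i$ and aggregate.

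\medskip

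\noindent\textbf{Plan.} First I would write, using the law of total probability over the value of $Y_i$,
\[
Pr[f_{i+1} \geq \mu y_i \mid Y_i \leq y_i] = \sum_{m=1}^{y_i} Pr[f_{i+1} \geq \mu y_i \mid Y_i = m]\cdot \frac{Pr[Y_i = m]}{Pr[Y_i \leq y_i]}.
\]
The weights $Pr[Y_i = m]/Pr[Y_i \leq y_i]$ are nonnegative and sum to $1$, so it suffices to bound each conditional probability $Pr[f_{i+1} \geq \mu y_i \mid Y_i = m]$ for $m \leq y_i$. Now observe that the event $\{f_{i+1} \geq \mu y_i\}$ is, when $m \leq y_i$, a \emph{weaker} requirement than $\{f_{i+1} \geq \mu m\}$ would be — we are asking $f_{i+1}$ to exceed a threshold that is at least as large. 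More precisely, since the number of subtrees in $F_{i+1}$ is at most $Y_i = m$, and each subtree has size that is stochastically dominated as in Lemma \ref{lemma:mrvx}, the same computation as in the proof of Proposition \ref{prop:Z} — now with $x_i = \mu y_i$ but only $m \leq y_i$ trees realizing it — gives
\[
Pr[f_{i+1} = \mu y_i \mid Y_i = m] \leq \binom{\mu y_i + m}{m} e^{-c\mu y_i} \leq \binom{\mu y_i + y_i}{y_i} e^{-c\mu y_i},
\]
where the last inequality uses $\binom{\mu y_i + m}{m} \leq \binom{\mu y_i + y_i}{y_i}$ (increasing the number of bins only increases the number of compositions, and $m \le y_i$). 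From here the identical geometric-series summation over thresholds $k \geq \mu y_i$ used in Proposition \ref{prop:Z} yields $Pr[f_{i+1} \geq \mu y_i \mid Y_i = m] \leq e^{-c_1 y_i}$, uniformly in $m \leq y_i$, with the same $\mu$ and $c_1$.

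\medskip

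\noindent\textbf{Conclusion and main obstacle.} Plugging this uniform bound back into the convex combination gives $Pr[f_{i+1} \geq \mu y_i \mid Y_i \leq y_i] \leq e^{-c_1 y_i} \cdot \sum_{m=1}^{y_i} Pr[Y_i=m]/Pr[Y_i \leq y_i] = e^{-c_1 y_i}$, which is exactly the claim with $c = c_1$. There is no real obstacle here — the corollary is a routine monotonicity/averaging consequence of the proposition. The only point requiring a moment's care is the combinatorial monotonicity step: verifying that bounding $f_{i+1}$ conditioned on having \emph{fewer} subtrees ($m \le y_i$) is dominated by the bound obtained when there are exactly $y_i$ subtrees, i.e. that $\binom{x+m}{m}$ is nondecreasing in $m$ for fixed $x$. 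This is immediate (it equals $\binom{x+m}{x}$, a product $\prod_{j=1}^{x}\frac{m+j}{j}$ that increases with $m$), so the whole argument collapses to reusing Proposition \ref{prop:Z}'s estimate with a worst-case choice of $m$.
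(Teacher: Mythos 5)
Your proof is correct, and it actually supplies a detail that the paper glosses over: the paper states Corollary \ref{corollary:Z} as an ``immediate'' consequence of Proposition \ref{prop:Z} with no argument. Note that the naive route -- decomposing $\{Y_i \leq y_i\}$ into $\{Y_i=m\}$ and invoking Proposition \ref{prop:Z} as a black box at each $m$ -- would only give $Pr[f_{i+1} \geq \mu y_i \mid Y_i = m] \leq Pr[f_{i+1} \geq \mu m \mid Y_i = m] \leq e^{-c_1 m}$, which is useless for small $m$; so the corollary is not literally a formal consequence of the proposition's statement. You handle this correctly by rerunning the proposition's counting argument with the larger threshold $x=\mu y_i$ but only $m \leq y_i$ parts, using the monotonicity $\binom{x+m}{m} \leq \binom{x+y_i}{y_i}$ to get the uniform bound $e^{-c_1 y_i}$ for every $m \leq y_i$, and then averaging over $m$. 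This is surely the argument the authors intend (fewer subtrees can only help), and your write-up makes the monotonicity step explicit; the only caveats you inherit are the same ones present in the paper's own proof of Proposition \ref{prop:Z}, namely the assumed (conditional) independence of the subtree sizes used to multiply the tail bounds from Lemma \ref{lemma:mrvx}, and the implicit treatment of $\mu y_i$ as an integer threshold when summing the geometric series.
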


Corollary \ref{corollary:Z}, which is about query
trees,  can be restated as follows: let $W_i = Y_i$, $q_i = y_i$ and
$\sum_{i=1}^{W_i} z_i = f_{i+1}$. Furthermore, let  $c'_i = c_1$ and
$\mu'_i = \mu_1$ for all $i$. This notation yields the following
proposition, which is unrelated to query trees, and which we used in
Section \ref{sect:matching}:

{
\renewcommand{\thetheorem}{\ref{proposition:Z2}}
\begin{prop}
Let $W_i$ be a random variable. Let $z_1, z_2,\ldots z_{W_i}$ be
random variables (some possibly equal to $0$ with probability $1$).
Assume that there exist constants $c$ and $\mu$ such that for all $1
\leq j \leq W_i$, $Pr[z_j \geq \mu N] \leq e^{-cN}$, for all $N>0$.
Then there exist constants $\mu_i$ and $c'_i$, which depend only on
$d$, such that for any $q_i>0$,
\[
Pr[\dsum_{j=1}^{W_i} z_j \geq \mu_i q_i|W_i\leq q_i] \leq e^{-c'_i
q_i}.
\]
\end{prop}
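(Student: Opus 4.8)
The plan is to reduce to a fixed number of summands and then average over $W_i$, re-running the argument of Proposition~\ref{prop:Z} (equivalently Corollary~\ref{corollary:Z}) with the tail bound $\Pr[|T_{i+1}^{(j)}|\ge N]\le e^{-cN}$ of Lemma~\ref{lemma:mrvx} replaced by the hypothesised tail bound on the $z_j$. Concretely: fix an integer $w$ with $0\le w\le q_i$ and condition on $W_i=w$. It suffices to show $Pr[\dsum_{j=1}^{w}z_j\ge \mu_i q_i\mid W_i=w]\le e^{-c'_i q_i}$ with $\mu_i,c'_i$ independent of $w$, since the quantity $Pr[\dsum_{j=1}^{W_i}z_j\ge \mu_i q_i\mid W_i\le q_i]$ is then a convex combination of such terms over $w\le q_i$. (As in the query-tree setting one assumes the tail hypothesis on each $z_j$ persists under conditioning on $W_i$; there $W_i=Y_i$ records level $i$ and the relevant $z_j$ are sizes of subtrees grown at level $i+1$ by an independent copy of the branching process, so in particular the $z_j$ are independent — a fact the union bound below will use.)

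For the conditional statement I would first normalise the tails: putting $m=\mu N$ in the hypothesis gives $Pr[z_j\ge m]\le e^{-c_0 m}$ for all $m>0$ with $c_0:=c/\mu$, hence $Pr[z_j=m]\le e^{-c_0 m}$ for every integer $m\ge 0$. Then, exactly as in Proposition~\ref{prop:Z}, a union bound over the $\binom{x+w-1}{w-1}\le\bigl(e(x+w)/w\bigr)^{w}$ ordered decompositions $x=m_1+\dots+m_w$, each of probability at most $\prod_j e^{-c_0 m_j}=e^{-c_0 x}$ by independence, gives
\[
Pr\Bigl[\dsum_{j=1}^{w}z_j=x\Bigr]\ \le\ \Bigl(\tfrac{e(x+w)}{w}\Bigr)^{w}e^{-c_0 x}\ =\ \exp\bigl(w+w\ln(1+x/w)-c_0 x\bigr).
\]
Since $w\mapsto w+w\ln(1+x/w)$ is increasing in $w>0$, for $w\le q_i\le x/\mu_i$ its value is at most $q_i\bigl(1+\ln(1+x/q_i)\bigr)$; writing $s:=x/q_i\ge\mu_i$, the exponent is then $\le q_i\bigl(1+\ln(1+s)-c_0 s\bigr)$. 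Choosing the constant $\mu_i$ (depending only on $c_0$, i.e.\ on $c,\mu$, hence only on $d$ in our applications) large enough that $1+\ln(1+s)\le(c_0/2)\,s$ for all $s\ge\mu_i$ — possible because the left side is $O(\log s)$ while the right side is linear — yields $Pr[\sum_{j\le w}z_j=x]\le e^{-(c_0/2)x}$ for every integer $x\ge\mu_i q_i$, uniformly over $w\le q_i$.

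Summing this geometric series over $x\ge\mu_i q_i$ gives $Pr[\sum_{j\le w}z_j\ge \mu_i q_i]\le e^{-(c_0/2)\mu_i q_i}/(1-e^{-c_0/2})\le e^{-c'_i q_i}$ for a suitable constant $c'_i>0$ (absorb the constant factor into the exponent, shrinking $c'_i$ slightly), uniformly in $w\le q_i$; averaging over $W_i$ as above completes the proof. I expect the only real obstacle to be the one already present in Proposition~\ref{prop:Z}: the number of decompositions of $x$ into up to $q_i$ nonnegative parts is itself exponentially large in $q_i$, so the per-decomposition bound $e^{-c_0 x}$ only wins once the threshold is pushed to $x\ge\mu_i q_i$ with $\mu_i$ large relative to $1/c_0$, and the bookkeeping needed to check that the surviving exponent is genuinely linear in $q_i$ (rather than in $q_i/w$) is the delicate point. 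A secondary point worth stating carefully is the independence and conditioning of the $z_j$, since the proposition as written only posits their marginal tails.
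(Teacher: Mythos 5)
Your proposal is correct and follows essentially the same route as the paper: the paper proves Proposition~\ref{prop:Z} by exactly this decomposition count ($\binom{x+y}{y}$ vectors, each of probability at most $e^{-cx}$ by independence of the pieces), chooses $\mu$ large enough that the exponential decay beats the binomial factor, sums the geometric tail, and then obtains Proposition~\ref{proposition:Z2} as a restatement of the resulting Corollary~\ref{corollary:Z}; you simply re-run that argument abstractly for the $z_j$. Your explicit handling of the conditioning on $W_i\le q_i$ (uniformity over $w$ plus a convex combination) and your remark that independence of the $z_j$ must be assumed beyond the stated marginal tails are points the paper leaves implicit, but they do not change the approach.
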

\addtocounter{theorem}{-1}
}

We need one more proposition before we can prove Theorem  \ref{thm:tree}. Notice that $f_1 = |T_1|$.
\begin{proposition}
\label{prop:y} For any infinite query tree $T$ with bounded degree
$d$,  for any $1 \leq i \leq L$, there exist constants $\mu_i$ and
$c_i$, which depend only on $d$, such that for  and any $N>0$,
\[
Pr[f_{i} \geq \mu_i N] \leq e^{-c_i N}.
\]
\end{proposition}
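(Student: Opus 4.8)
The plan is to prove Proposition~\ref{prop:y} by induction on $i$. The base case $i=1$ is exactly Lemma~\ref{lemma:mrvx} applied to the single tree $T_1 = T_1^{(1)}$ rooted at $v_0$ (here there is only one subtree on level $1$), which gives $\Pr[f_1 \geq N] \leq e^{-cN}$, so we may take $\mu_1 = 1$ and $c_1 = c$. For the inductive step, assume the statement holds for level $i$ with constants $\mu_i$ and $c_i$. We want to bound $\Pr[f_{i+1} \geq \mu_{i+1} N]$. The key is that $f_{i+1}$ is built from the subtrees in $F_{i+1}$, whose number is at most $Y_i = \sum_{j=1}^i f_j$, and each of those subtrees is distributed as one of the $T_{i+1}^{(j)}$, for which Lemma~\ref{lemma:mrvx} gives an exponential tail. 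So the bound on $f_{i+1}$ should follow by combining a bound on $Y_i$ with Proposition~\ref{prop:Z} (equivalently Corollary~\ref{corollary:Z} / Proposition~\ref{proposition:Z2}).

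Concretely, I would first control $Y_i = \sum_{j=1}^i f_j$. Since $i \leq L$ is a fixed constant depending only on $d$, and each $f_j$ satisfies $\Pr[f_j \geq \mu_j M] \leq e^{-c_j M}$ by the induction hypothesis (applied for all $j \leq i$), a union bound over the $i$ terms yields constants $\hat\mu_i$ and $\hat c_i$, depending only on $d$, such that $\Pr[Y_i \geq \hat\mu_i N] \leq e^{-\hat c_i N}$ for any $N>0$. Then I would condition on the event $\{Y_i \leq \hat\mu_i N\}$ and apply Corollary~\ref{corollary:Z} with $y_i = \hat\mu_i N$: this gives constants $\mu$ and $c$ (depending only on $d$) with
\[
\Pr[f_{i+1} \geq \mu \hat\mu_i N \mid Y_i \leq \hat\mu_i N] \leq e^{-c\hat\mu_i N}.
\]
Splitting on whether $Y_i \leq \hat\mu_i N$ holds, exactly as in the display chain inside the proof of Proposition~\ref{proposition:Z1},
\[
\Pr[f_{i+1} \geq \mu\hat\mu_i N] \leq \Pr[f_{i+1} \geq \mu\hat\mu_i N \mid Y_i \leq \hat\mu_i N] + \Pr[Y_i > \hat\mu_i N] \leq e^{-c\hat\mu_i N} + e^{-\hat c_i N},
\]
which is at most $e^{-c_{i+1} N}$ for a suitable $c_{i+1}>0$ and with $\mu_{i+1} = \mu\hat\mu_i$; all constants depend only on $d$ (and on $L$, which is itself a function of $d$). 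This closes the induction.

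The main obstacle, such as it is, is bookkeeping: making sure the constants $\mu_i, c_i$ at each level genuinely depend only on $d$ and do not degrade in an uncontrolled way as $i$ grows — but since $i$ ranges over the fixed finite set $\{1,\dots,L\}$ with $L = L(d)$, only finitely many constants are ever produced, so taking the worst of them is harmless. A secondary point to be careful about is that the subtrees forming $F_{i+1}$ are conditionally independent given the structure at levels $\leq i$ and each is distributed as some $T_{i+1}^{(j)}$; this is precisely the independence observation already recorded in the text ("all these subtrees $\{T_i^{(j)}\}$ are generated independently by the same stochastic process"), so Proposition~\ref{prop:Z} applies verbatim. Finally, Theorem~\ref{thm:tree} then follows by one more union bound: $|T| = \sum_{i=1}^{L} f_i = Y_L$, and applying Proposition~\ref{prop:y} at each of the $L$ levels together with a union bound gives $\Pr[|T| \geq N] \leq e^{-cN}$ for an appropriate constant $c = c(d)$.
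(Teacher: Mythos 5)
Your proof is correct and takes essentially the same route as the paper: induction on the level, with the inductive step conditioning on the event that $Y_i$ is small, invoking Corollary~\ref{corollary:Z}, and then splitting on that event exactly as in the displayed chain of inequalities. The only difference is that you spell out the union-bound step turning the tail bounds on the individual $f_j$ into a tail bound on $Y_i$, which the paper's proof leaves implicit.
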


The proof is similar to the proof of Proposition \ref{proposition:Z1}. We include it for completeness.

\begin{proof}
The proof is by induction on the levels $1 \leq i \leq L$, of $T$.

 For the base of the induction, $i=1$, by Lemma \ref{lemma:mrvx}, we have that  there exist some constants $\mu_1$ and $c_1$ such that
\begin{equation*}
 Pr[f_1 \geq \mu_1 N] \leq  e^{-c_1 N},
 \end{equation*}
as $f_1=|T_1|$.

 For the inductive step, we assume that the proposition holds for levels $1,2,\ldots, i-1$, and  show that it holds for level $i$.

 \begin{align}
 Pr[f_i \geq \mu_i N]   =&  Pr[f_i \geq \mu_i N| Y_{i-1} < \mu_{i-1} N]\cdot Pr[Y_{i-1} < \mu_{i-1} N] \notag \\
 &+  Pr[f_i \geq \mu_i N|Y_{i-1} \geq \mu_{i-1} N]\cdot Pr[ Y_{i-1} \geq \mu_{i-1}]  \notag \\
  \leq & Pr[f_i \geq \mu_i N|Y_{i-1} < \mu_{i-1} N] + Pr[ Y_{i-1} \geq \mu_{i-1}]  \notag \\
  \leq & e^{-c N} + e^{-c_{i-1} N} \label{po}\\
  \leq & e^{-c_{i} N} ,\notag
\end{align}
for some constant $c_i$. Inequality \ref{po} stems from Corollary
\ref{corollary:Z} and the inductive hypothesis.\qed
\end{proof}
We are now ready to prove Theorem \ref{thm:tree}.
\begin{proof}[Proof of Theorem \ref{thm:tree}]
 We would like to bound $Pr[|T|=\dsum_{i=1}^{L}f_{i} \geq \mu N]$. From Proposition \ref{prop:y}, we have that for $1 \leq i \leq L$,
 \[
 Pr[f_{i} \geq \mu_i N] \leq e^{-c_i N}.
 \]
 A union bound on the $L$ levels gives the required result. \qed
\end{proof}


\appendix
\section{Pseudocode for Algorithm {\bf GreedyMIS}}
\label{pseudoMIS}

\begin{algorithm}[H]
\caption{ - {\bf GreedyMIS} - On-line MIS algorithm with input $G=(V,E)$ and vertex permutation $\pi$} \label{alg:greedy}
\begin{algorithmic}[1]
\State $I\gets \emptyset$\Comment{$I$ is a set of independent vertices}.
\State Let $\pi = (v_1, v_2, \ldots, v_n)$.
\For{$i=1 $ to  $ n$}
    \If {$\forall u \in N(v_i)$, $u\not\in I$}
    \State $I \leftarrow  I \cup \{v_i\}$.
    \EndIf
\EndFor
\State Output $I$ \Comment{$I$ is an MIS}.
\end{algorithmic}
\end{algorithm}

\section{Pseudocode for Algorithm {\bf LocalMM}}
\label{pseudo}

\begin{algorithm}[H]
\caption{ - {\bf LocalMM} - LCA for MM with input $G=(V,E)$, $e \in E$ and $\eps>0$}\label{alg:Mell}
\label{alg:main}
\begin{algorithmic}[1]
\State Global $\mathcal{S} = \emptyset$ \Comment{$\mathcal{S}$ is the set of seeds}
\Statex
\Procedure{Main}{$G, e, \eps$}
\If {this is the first execution of Algorithm  {\bf LocalMM}}
\State $(\mathcal{S}, k)  \gets$ \textsc{Initialize}$(G, \eps)$
\EndIf
\State Return \textsc{IsInMatching}$(G, e, 2k-1, \mathcal{S})$.
\EndProcedure

\end{algorithmic}
\end{algorithm}

\begin{algorithm}[H]
\caption{Auxiliary procedures}
\begin{algorithmic}[1]

\Procedure{Initialize}{$G, \eps$} \Comment{This is run only at the first execution}
\State $k\gets \lceil 1/\eps \rceil$.
\For {$\ell=1,3, \ldots 2k-1$}
\State Generate a seed $s_{\ell}$ of length $O(\log^3{n})$. \label{line:seeds} \Comment $s_{\ell}$ is a seed for a random ordering $\pi_{\ell}$ on all possible paths of length $\ell$ in $G$.
\EndFor
\State $\mathcal{S} = \displaystyle\bigcup_{\ell} \mathcal{S}_{\ell}$.
\State Return $(\mathcal{S}, k)$.
\EndProcedure 

\Statex
\Procedure {IsInMatching}{$G$, $e$, $\ell$,  $\mathcal{S}$}
\If {$\ell = -1$ } \Comment{The empty matching}
	\State Return false.
\EndIf
\State $b_1$ = \textsc{IsInMatching}($G$, $e$, $\ell-2$,  $\mathcal{S}$).
\State $b_2 =$ false.
\State $P = \{p \in G : e \in p \wedge |p|=\ell$ \}
\For {all $p \in P$} 
	\If {\textsc{IsPathInMIS($G$, $p$, $\ell$, $\mathcal{S}$)}} \label{line:tough}
		\State $b_2 =$ true.
	\EndIf
\EndFor
	\State Return $b_1 \oplus b_2$.
\EndProcedure

\Statex
\Procedure {IsPathInMIS}{$G$, $p$, $\ell$,  $\mathcal{S}$}
\State $C\leftarrow $ \textsc{RelevantPaths}($G$, $p$, $\ell$, $\mathcal{S}$). \Comment{$C$ is a subgraph of $C_{M_{\ell-2}}(\ell)$}
\State $I \gets$ {\bf Greedy MIS} $(C, \pi(C,s_{\ell}))$
\State $b = (v \in I)$
\State Return $b$
\EndProcedure

\Statex
\Procedure {IsFree}{$G$, $v$, $\ell$, $\mathcal{S}$} \Comment{Checks that a vertex is free}
\State IsFreeVertex $=$ true.
\For {all $u \in N(v)$}\Comment{All edges touching $v$}
\If {\textsc{IsInMatching}$(G, (u,v), \ell-2, \mathcal{S})$}
\State IsFreeVertex  $=$ false.
\EndIf
\EndFor
\State Return IsFreeVertex.
\EndProcedure

\end{algorithmic}
\end{algorithm}

\begin{algorithm}
\caption{More auxiliary procedures}
\begin{algorithmic}[1]

\Procedure {RelevantPaths}{$G$,  $p$, $\ell$,  $\mathcal{S}$}
\State Initialize $C=(V_C,E_C) \gets (\emptyset, \emptyset)$.
\If {\textsc{IsAnAugmentingPath}$(G, p, \ell, \mathcal{S})$}
\State $V_C = \{p\}$.
\Else
\State Return $C$.
\EndIf
\While { $\exists p\in V_C: (p,p') \in E_C, r_{\ell}(p',s_{\ell})<r_{\ell}(p, s_{\ell})$}
\If {\textsc{IsAnAugmentingPath}($G,  p', \ell, \mathcal{S}$)}
\State $V_C \leftarrow p'$
\For {all $p'' \in N(p')$} \Comment{Edges between $p'$ and vertices in $V_C$}
\If {$p'' \in V_C$}
\State $E_C \leftarrow (p',p'')$.
\EndIf
\EndFor
\EndIf
\EndWhile
\State Return $C$.
\EndProcedure

\Statex
\Procedure {IsAnAugmentingPath}{$G$, $p$, $\ell$, $\mathcal{S}$} \Comment{Checks that $p$ is an augmenting path.}
\State If $\ell=1$ return TRUE. \Comment{all edges are augmenting paths of the empty matching}
\State Let $p= (e_1, e_2, \ldots, e_\ell)$, with end vertices $v_1, v_{\ell+1}$.
\State IsPath $=$ true.
\For {$i=1$ to $\ell$}
\If {$i\pmod{2} = 0$} \Comment{All even numbered edges should be in the matching}
\If {$\neg$\textsc{IsInMatching}$(G, e_i, \ell-2,  \mathcal{S})$}
\State IsPath  $=$ false.
\EndIf
\EndIf
\If {$i\pmod{2} = 1$} \Comment{No odd numbered edges should be in the matching}
\If {\textsc{IsInMatching}$(G, e_i, \ell-2,  \mathcal{S})$}
\State IsPath  $=$ false.
\EndIf
\EndIf
\EndFor
\If { ($\neg$\textsc{IsFree}$(G, v_1, \ell,  \mathcal{S}) \vee$ $(\neg$\textsc{IsFree}$(G, v_{\ell+1}, \ell,  \mathcal{S})$}
\State  IsPath  $=$ false. \Comment{The vertices at the end should be free}
\EndIf
\State Return IsPath.
\EndProcedure

\end{algorithmic}
\end{algorithm}

\end{document}